\documentclass{article}

\usepackage{arxiv}

\usepackage[utf8]{inputenc} 
\usepackage[T1]{fontenc}    
\usepackage{hyperref}       
\usepackage{url}            
\usepackage{booktabs}       
\usepackage{amsfonts}       
\usepackage{nicefrac}       
\usepackage{microtype}      
\usepackage{lipsum}		
\usepackage{graphicx}
\usepackage{natbib}
\usepackage{doi}

\usepackage{amsmath, amssymb}
\usepackage{enumitem}
\usepackage{caption}
\usepackage{subcaption}
\usepackage[]{graphicx}
\usepackage[ruled,linesnumbered,noend]{algorithm2e} 
\usepackage{mathtools}
\usepackage{bbm}
\usepackage{amsthm}
\usepackage{multirow}
\usepackage{dsfont}
\usepackage{thm-restate}

\newcommand{\prob}{\mathds{P}}

\newcommand{\gaussbm}{\mathsf{GaussBM}}

\newcommand{\punctual}{\mathsf{PunctualSketch}}
\newcommand{\punctualcms}{\mathsf{PunctualCMS}}
\newcommand{\punctualcs}{\mathsf{PunctualCS}}
\newcommand{\lazy}{\mathsf{LazySketch}}
\newcommand{\lazycms}{\mathsf{LazyCMS}}
\newcommand{\lazycs}{\mathsf{LazyCS}}

\newcommand{\cms}{\mathsf{CMS}}
\newcommand{\cs}{\mathsf{CS}}

\newcommand{\mg}{\mathsf{MisraGries}}
\newcommand{\lazyhh}{\mathsf{LazyHH}}

\newtheorem{theorem}{Theorem}
\newtheorem{lemma}{Lemma}
\newtheorem{problem}{Problem}
\newtheorem{fact}{Fact}
\newtheorem{definition}{Definition}

\title{Faster Differentially Private Sketches under Continual Observation}


\author{{Rayne Holland} \\
	Data61\\
	CSIRO\\
	Australia \\
	\texttt{rayne.holland@data61.csiro.au} \\
}




\begin{document}
\maketitle

\begin{abstract}
Linear sketches are  fundamental tools in data stream analytics.
They are notable for supporting both approximate frequency queries and heavy hitter detection with bounded trade-offs for error and memory. 
Importantly, on streams that contain sensitive information, linear sketches can be easily privatized with the injection of a suitable amount of noise.
This process is efficient in the single release model, where the output is released only at the end of the stream.  
In this setting, it suffices to add noise to the sketch once.

In contrast, in the continual observation model, where the output is released at every time-step, fresh noise needs to be added to the sketch before each release.
This creates an additional computational overhead.
To address this, we introduce Lazy Sketch, a novel differentially private  sketching method that employs lazy updates, perturbing and modifying only a small portion of the sketch at each step.
Compared to prior work, we reduce the update complexity by a factor of $O(w)$, where $w$ is the width of the sketch.
Experiments demonstrate that our method increases throughput by up to 250× over prior work, making continual observation differential privacy practical for high-speed streaming applications.

\end{abstract}

\keywords{Differential Privacy \and Data Streams}

\section{Introduction}
\label{sec:intro}

Estimating item frequencies and detecting high-frequency items (called \textit{heavy hitters}) are foundational problems in data stream processing, where algorithms must operate under tight memory and computational constraints.
A prevalent and effective approach to these problems is \emph{linear sketching}, which compresses data by hashing item frequencies into a small array of shared counters.
This technique supports efficient approximate frequency queries while offering strong guarantees on both space usage and estimation accuracy.
Notable examples include Count-Min Sketch ($\cms$)~\cite{cormode2005improved} and Count Sketch ($\cs$)~\cite{charikar2002finding}, which have become standard tools in a range of applications, such as network traffic measurement \cite{bu2010sequential,jang2020sketchflow,liu2016one,li2022stingy,roughan2006secure, yu2013software}.

In privacy-sensitive settings, such as network activity monitoring, it is crucial that streaming data structures avoid leaking sensitive information.
To address this challenge, differential privacy has become the prevailing standard for protecting data streams.
It guarantees that an observer analyzing the outputs of the algorithm is fundamentally limited, in an information-theoretic sense, in their ability to infer the presence or absence of any individual data point in the stream. 

Differential privacy can be provided under two distinct models: \textit{single release} or \textit{continual observation}.
In the single release setting, the output is released only after processing the entire stream.
Linear sketches can be privatized efficiently in this model, by injecting calibrated noise into the sketch once, at the beginning of the stream~\cite{pagh2022improved, zhao2022differentially}. 
This approach achieves strong privacy-utility trade-offs with minimal computational overhead.

The continual observation model demands that an algorithm release outputs at every time step, while safeguarding privacy across the full sequence of releases.
This requirement mirrors practical settings where systems must deliver real-time insights without compromising sensitive data.
Epasto \textit{et al.} recently initiated a study of linear sketches in this setting \cite{epasto2023differentially}. 
Their approach, which we term the Punctual Sketch ($\punctual$), injects fresh noise to every counter at each update. 
This meticulous protection ensures strong privacy guarantees but exacts a heavy computational toll, as the update cost scales with the sketch size.

\subsection{Our Solution}

The $\punctual$ employs a $d \times w$ array of counters, each designed to satisfy differential privacy under continual observation. 
All $dw$ counters are updated at each stream arrival, leading to an update complexity of $\Omega(dw)$.
We adopt the same structural foundation as the $\punctual$, but remove the dependence on $w$ in the update complexity.

Our central innovation lies in the use of lazy updates -- a partial update strategy that modifies and injects noise into only a small, rotating subset of the output sketch at each time step, leaving the remainder untouched. 
This approach amortizes the cost of noise injection over time, resulting in substantial gains in update efficiency.
The outcome is significant as, in practice, applications will increase $w$ to reduce query error.
Therefore, our approach, the $\lazy$, can accommodate increases to $w$ without sacrificing on throughput. 
 
Our contributions are as follows
\begin{itemize}
    \item 
    $\lazy$ (Algorithm~\ref{alg:circular_sketch}), a framework for differentially private linear sketches under continual observation.
    For sketches of width $w$, it observes a $O(w)$ factor reduction in update time against the state-of-the-art, with no loss in utility.
    \item Implementations of continual observation sketches\footnote{\url{https://github.com/rayneholland/CODPSketches}} and, to the best of our knowledge, the first empirical evaluation of this problem.
    \item Experiments show that the $\lazy$ observes up to a factor 250 improvement on throughput against prior work.
    It also systematically observes better utility across a range of parameter settings.
\end{itemize}
\noindent
A high level comparison of our method against prior work is provided in Table~\ref{tab:comparison}.

\begin{table*}[t]
    \centering
    \begin{tabular}{|c||c|c|c|} \hline
        & Accuracy
                & Time
                    & Memory \\ \hline \hline
    \multicolumn{4}{|c|}{\textbf{Frequency Estimation}} \\ \hline
    $\punctual$ \cite{epasto2023differentially}
        & $\gamma_1(T) + \gamma_2(T)$
                & $wd \cdot \mathcal{T}(T)$
                    & $wd \cdot \mathcal{M}(T)$\\ \hline
    $\lazy$ 
        & $\gamma_1(T) + \gamma_2(T/w) + w$
                & $d \cdot \mathcal{T}(T/w)$
                    & $wd \cdot \mathcal{M}(T/w)$\\ \hline
    \end{tabular}
    \caption{
    Big-$\mathcal{O}$ notation is omitted for clarity.
    Accuracy and throughput for continual observation sketches with dimensions $d \times w$.
    $\gamma_1(t)$ refers to the additive error of a \textit{non-private} $d\times w$ linear sketch at time $t$ (that is, the error due to collisions).
    $\gamma_2(t)$ refers to the additive error of $dw$ differentially private continual observation counters at time $t$.
    $\mathcal{T}(t)$ refers to the update time of a continual observation counter on a stream of length $t$.
    $\mathcal{Q}(t)$ refers to the query time of a continual observation counter on a stream of length $t$.
    $\mathcal{M}(t)$ refers to the memory allocation of a continual observation counter on a stream of length $t$.
    Note that, if $\mathcal{M}(t)$ is increasing, for given sketch dimensions $d\times w$, the memory allocation of the $\lazy$ is smaller that that of $\punctual$.
    }
    \label{tab:comparison}
\end{table*}

\subsection{Paper Outline}

Section~\ref{sec:prelims} provides preliminaries that span differential privacy and the fundamentals of linear sketching.
Section~\ref{sec:gauss_bm} introduces a method for counting under continual observation differential privacy.
This method serves as building block in the linear sketches.
Section~\ref{sec:lazy_sketch} presents the $\lazy$, our framework for linear sketching under continual observation differential privacy.
Lastly, Section~\ref{sec:experiments} provides an empirical evaluation of our methods.

\section{Preliminaries}
\label{sec:prelims}
\subsection{Privacy}

Differential privacy ensures that the inclusion or exclusion of any individual in a dataset has a minimal and bounded impact on the output of a mechanism.
Two streams $\mathcal{X} = \{x_1, \ldots, x_T\}$ and $\mathcal{X}^{\prime}= \{x_1^{\prime}, \ldots, x_T^{\prime}\}$ are \textit{neighboring}, 
denoted $\mathcal{X} \sim \mathcal{X}^{\prime}$,
if they differ in one element.
Formally, $\mathcal{X} \sim \mathcal{X}^{\prime}$ if there exists a unique $i$ such that
$x_i \neq x_i^{\prime}$.
The following definition of differential privacy is adapted from Dwork and Roth~\cite{dwork2014algorithmic}.
\begin{definition}[\((\varepsilon, \delta)\)-Differential Privacy]
Let $\mathcal{X}, \mathcal{X}' \in \mathcal{U}^*$ be neighboring input streams (i.e., differing in at most one element). A randomized mechanism $\mathcal{M}$ satisfies \((\varepsilon, \delta)\)-differential privacy under a given observation model if for all measurable sets of outputs \( Z \subseteq \mathtt{support}(\mathcal{M}) \), it holds that:
\[
\prob[\mathcal{M}(\mathcal{X}) \in Z] \leq e^{\varepsilon} \cdot \prob[\mathcal{M}(\mathcal{X}') \in Z] + \delta.
\]
\label{def:diff_privacy}
\end{definition}
\noindent
In streaming there are two types of observation model: \textit{single release} and \textit{continual observation}.
In the single release model, the mechanism processes the stream once and releases a single differentially private output after the final stream update has been observed.
Under continual observation, the mechanism releases an output at each time step, and differential privacy is required to hold for the sequence of outputs up to every time step.

The Gaussian mechanism is a fundamental technique for ensuring $(\varepsilon, \delta)$-differential privacy by adding noise calibrated to a function's $\ell_2$-sensitivity.  
Let $\triangle_2(f) = \max_{\mathcal{X} \sim \mathcal{X}^{\prime}} \lVert f(\mathcal{X}) - f(\mathcal{X}^{\prime}) \rVert_2$ denote the $\ell_2$-sensitivity of the function $f$.

\begin{lemma}[Gaussian Mechanism {\cite{dwork2014algorithmic}}]
Let \( f \) be a function with $\ell_2$-sensitivity $\triangle_2(f)$. For any $\varepsilon \in (0,1)$ and $\delta \in (0,1)$, the mechanism
\[
\mathcal{M}(\mathcal{X}) = f(\mathcal{X}) + \mathcal{N}\left(0, \sigma^2 I\right),
\]
where $\sigma \geq \frac{\triangle_2(f) \cdot \sqrt{2 \ln (1.25/\delta)}}{\varepsilon}$, satisfies $(\varepsilon, \delta)$-differential privacy.
\label{lem:gaussian_mechanism}
\end{lemma}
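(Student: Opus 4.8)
The plan is to bound the privacy loss random variable and then invoke the standard characterization of $(\varepsilon,\delta)$-differential privacy in terms of its tail. First I would fix an arbitrary pair of neighboring streams $\mathcal{X} \sim \mathcal{X}'$ and set $v = f(\mathcal{X}') - f(\mathcal{X})$, so that $\lVert v \rVert_2 \le \triangle_2(f)$ by definition of the $\ell_2$-sensitivity. Writing $p$ and $p'$ for the densities of $\mathcal{M}(\mathcal{X})$ and $\mathcal{M}(\mathcal{X}')$, both are shifted copies of the spherical Gaussian $\mathcal{N}(0,\sigma^2 I)$, so the privacy loss at an output $y$ admits the closed form
\[
L(y) = \ln \frac{p(y)}{p'(y)} = \frac{1}{2\sigma^2}\bigl(\lVert y - f(\mathcal{X}') \rVert_2^2 - \lVert y - f(\mathcal{X}) \rVert_2^2\bigr).
\]

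Next I would exploit the spherical symmetry of the Gaussian. Parametrizing $y = f(\mathcal{X}) + z$ with $z \sim \mathcal{N}(0,\sigma^2 I)$ and expanding the squared norms reduces $L$ to $\tfrac{1}{2\sigma^2}(\lVert v \rVert_2^2 - 2\langle z, v\rangle)$, which depends on $z$ only through its projection onto $v$. Since $\langle z, v\rangle \sim \mathcal{N}(0, \sigma^2\lVert v\rVert_2^2)$, the privacy loss $L$ is itself a one-dimensional Gaussian with mean $\lVert v\rVert_2^2/(2\sigma^2)$ and variance $\lVert v\rVert_2^2/\sigma^2$. Inspecting how both the mean and the variance grow with $\lVert v\rVert_2$, the tail of $L$ is maximized at the largest admissible gap, so it suffices to treat the worst case $\lVert v\rVert_2 = \triangle_2(f)$.

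The crux of the argument is then a Gaussian tail estimate: I would show that the prescribed $\sigma$ forces $\prob_{y\sim\mathcal{M}(\mathcal{X})}[L(y) > \varepsilon] \le \delta$. Substituting the mean and variance above, this amounts to controlling $\prob[\mathcal{N}(0,\sigma^2) > \sigma^2\varepsilon/\triangle_2(f) - \triangle_2(f)/2]$ via the standard bound $\prob[\mathcal{N}(0,\sigma^2) > t] \le \tfrac{\sigma}{t\sqrt{2\pi}}e^{-t^2/(2\sigma^2)}$. I expect this to be the main obstacle: the calculation requires the hypothesis $\varepsilon < 1$ to discard lower-order terms, and careful constant-tracking is precisely what produces the specific factor $1.25$ inside the logarithm rather than the naive $1$. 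Finally, once the tail bound is in hand, I would close the argument with the standard lemma that an upper bound of $\delta$ on the probability that the privacy loss exceeds $\varepsilon$ implies $\prob[\mathcal{M}(\mathcal{X}) \in Z] \le e^{\varepsilon}\prob[\mathcal{M}(\mathcal{X}') \in Z] + \delta$ for every measurable $Z$; applying this in both directions (using that neighboring is a symmetric relation) yields $(\varepsilon,\delta)$-differential privacy.
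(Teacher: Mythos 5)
The paper does not prove this lemma itself---it is stated as a cited result from Dwork and Roth---and your outline is precisely the standard argument from Appendix A of that reference: compute the privacy-loss random variable for two shifted spherical Gaussians, observe it is a one-dimensional Gaussian with mean $\lVert v\rVert_2^2/(2\sigma^2)$ and variance $\lVert v\rVert_2^2/\sigma^2$ whose tail is worst at $\lVert v\rVert_2=\triangle_2(f)$, bound that tail using $\varepsilon\le 1$ to obtain the $2\ln(1.25/\delta)$ constant, and convert the tail bound into $(\varepsilon,\delta)$-DP via the standard privacy-loss lemma applied in both directions. The proposal is correct and matches the source's approach; no gaps.
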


\begin{fact}[Tail bound on the sum of Gaussian RVs]
Let $Y = \sum_{i=1}^h Y_i$, where $Y_i \sim \mathcal{N}(0, \sigma^2)$ are i.i.d. Gaussian random variables. Then, for any $\beta \in (0,1)$,
\[
\Pr\left[|Y| \geq \sigma \sqrt{2 h \ln(2/\beta)}\right] \leq \beta.
\]
\label{fact:gaussian_sum}
\end{fact}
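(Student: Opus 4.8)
The plan is to reduce the claim to the one-dimensional Gaussian tail bound. The key observation is that a sum of independent Gaussians is again Gaussian: since the $Y_i$ are i.i.d.\ with $Y_i \sim \mathcal{N}(0,\sigma^2)$, their variances add, and $Y = \sum_{i=1}^h Y_i \sim \mathcal{N}(0, h\sigma^2)$. This collapses the problem from a sum of $h$ variables to a single centered Gaussian with standard deviation $\sigma\sqrt{h}$, whose tail I can control directly.

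Next I would apply the standard sub-Gaussian tail bound. For $Z \sim \mathcal{N}(0,s^2)$ and any $t>0$ one has $\Pr[Z \geq t] \leq \exp\!\left(-t^2/(2s^2)\right)$, and by symmetry the two-sided bound $\Pr[|Z| \geq t] \leq 2\exp\!\left(-t^2/(2s^2)\right)$. If one does not wish to invoke this as known, it follows in a line from the Chernoff method: using the Gaussian moment generating function $\mathbb{E}[e^{\lambda Z}] = e^{\lambda^2 s^2/2}$, Markov's inequality gives $\Pr[Z \geq t] \leq e^{-\lambda t + \lambda^2 s^2/2}$, and optimizing at $\lambda = t/s^2$ produces the exponent $-t^2/(2s^2)$.

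It then remains only to substitute the stated threshold. Setting $s^2 = h\sigma^2$ and $t = \sigma\sqrt{2h\ln(2/\beta)}$, the exponent becomes
\[
\frac{t^2}{2s^2} = \frac{\sigma^2 \cdot 2h\ln(2/\beta)}{2h\sigma^2} = \ln(2/\beta),
\]
so that $\Pr[|Y| \geq t] \leq 2\exp(-\ln(2/\beta)) = 2 \cdot \tfrac{\beta}{2} = \beta$, which is exactly the claimed inequality.

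Because each step is entirely standard, there is no real obstacle here; the only point requiring care is bookkeeping of constants, namely that the factor of $2$ from the two-sided bound cancels precisely against the $\ln(2/\beta)$ baked into the threshold. This cancellation is what forces the constant inside the square root to be $2$, and verifying it is the whole content of the calculation.
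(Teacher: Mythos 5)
Your proof is correct: the paper states this as a Fact without proof, and your argument --- noting $Y \sim \mathcal{N}(0, h\sigma^2)$ and applying the Chernoff/sub-Gaussian tail bound with the threshold chosen so the exponent equals $\ln(2/\beta)$ --- is exactly the standard derivation one would supply. The constant bookkeeping (the factor of $2$ from the two-sided bound cancelling against $\ln(2/\beta)$) checks out.
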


\subsection{Linear Sketching }

Linear sketches enable approximate frequency queries, in small memory, by hashing items to a compact collection of shared counters. 
We focus on the well known Count-min Sketch ($\cms$) \cite{cormode2005improved} and Count Sketch ($\cs$) \cite{charikar2002finding}. 
While they share a similar structure, they differ in their update and query procedures and offer different error trade-offs.

\subsubsection*{Count-Min Sketch}
The $\cms$ uses $d$ hash functions $h_1, \ldots, h_d : \mathcal{U} \rightarrow [w]$ to maintain a $d \times w$ matrix of counters $C \in (\mathrm{R}^+)^{d \times w}$. 
For each incoming item $x \in \mathcal{U}$ with frequency increment $c_x$, the update rule is:
\[
    C[i, h_i(x)] \leftarrow C[i, h_i(x)] + c_x, \quad \text{for all } i \in [d].
\]
This update procedure is illustrated in Figure~\ref{fig:cms}.
To estimate the frequency of $x$, $\cms$ returns $\hat{f}_x = \min_{i \in [d]} C[i, h_i(x)]$.
This estimator is biased upwards due to hash collisions with other items but offers strong probabilistic error guarantees.

\begin{lemma}[Approximation error for $\cms$ \cite{cormode2005improved}]
    For $x \in \mathcal{U}$ and any $\beta, \eta \in (0,1)$, the estimation error of Count Min Sketch with  \( d = \lceil \log(1/\beta) \rceil \)  repetitions and table size $w = \lceil 2/\eta \rceil $ satisfies,
    \begin{align*}
        \prob[\hat{f}_x - f_x > \eta F_1] \leq \beta, 
    \end{align*}
    where $F_1 = \sum_{y\in \mathcal{U}} f_y$.
    \label{lem:cms_approx}
\end{lemma}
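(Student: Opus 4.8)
The plan is to follow the classical analysis of Cormode and Muthukrishnan: decompose each counter into the true frequency plus a nonnegative collision term, bound that term in expectation, apply Markov's inequality per row, and then amplify the per-row failure probability by taking the minimum over the $d$ independently hashed rows.

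First I would fix a row $i \in [d]$ and write $C[i, h_i(x)] = f_x + Z_i$, where $Z_i = \sum_{y \neq x} f_y \cdot \mathds{1}[h_i(y) = h_i(x)]$ collects the contributions of all other items colliding with $x$ in row $i$. Since every $f_y \geq 0$, we have $Z_i \geq 0$, so each row overestimates; this is precisely why $\hat f_x = \min_i C[i,h_i(x)]$ is the right estimator and always satisfies $\hat f_x \geq f_x$. The error of interest is therefore the one-sided quantity $\hat f_x - f_x = \min_i Z_i$.

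Next I would bound the expected overcount. Assuming the $h_i$ are drawn from a (pairwise-independent) family mapping uniformly into $[w]$, we have $\expect[\mathds{1}[h_i(y)=h_i(x)]] = 1/w$ for each $y \neq x$, so by linearity $\expect[Z_i] = \tfrac{1}{w}\sum_{y\neq x} f_y \leq F_1/w$. With $w = \lceil 2/\eta \rceil \geq 2/\eta$ this gives $\expect[Z_i] \leq \eta F_1 / 2$, and Markov's inequality yields $\prob[Z_i > \eta F_1] \leq \expect[Z_i]/(\eta F_1) \leq 1/2$. Finally I would combine the rows: because the hash functions for different rows are chosen independently, the events $\{Z_i > \eta F_1\}$ are independent across $i$, and $\hat f_x - f_x > \eta F_1$ holds only if every row overcounts by more than $\eta F_1$; hence $\prob[\hat f_x - f_x > \eta F_1] = \prod_{i=1}^d \prob[Z_i > \eta F_1] \leq 2^{-d}$. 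Taking $d = \lceil \log(1/\beta)\rceil$ with logarithm base $2$ makes $2^{-d} \leq \beta$, completing the argument.

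I expect the only genuine subtlety to be the justification that the minimum turns a per-row failure probability of $1/2$ into the product $2^{-d}$: this relies on the independence of the hash functions across rows together with the one-sided, upward-biased nature of the error (so that the minimum exceeds the threshold exactly when all rows do). Care must also be taken with the base of the logarithm and the ceilings in the definitions of $w$ and $d$ so that the constants line up exactly.
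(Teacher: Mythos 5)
Your proof is correct and is exactly the classical argument from Cormode and Muthukrishnan; the paper itself states this lemma as a cited fact without proof, so there is no alternative in-paper argument to compare against. The decomposition $C[i,h_i(x)] = f_x + Z_i$ with $Z_i \geq 0$, the bound $\expect[Z_i] \leq F_1/w \leq \eta F_1/2$ via pairwise independence, Markov to get per-row failure probability $\tfrac{1}{2}$, and the product over independently hashed rows giving $2^{-d} \leq \beta$ are all handled correctly, including the base-$2$ logarithm needed to match $w = \lceil 2/\eta\rceil$ and $d = \lceil \log(1/\beta)\rceil$.
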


\subsubsection*{Count Sketch}
The $\cs$  modifies the $\cms$ by incorporating randomized signs to reduce bias. 
In addition to the hash functions $h_1, \ldots, h_d : \mathcal{U} \rightarrow [w]$, it uses a second family of hash functions $g_1, \ldots, g_d : \mathcal{U} \rightarrow \{-1, +1\}$ that assign random signs.
For each $x \in \mathcal{U}$, with increment $c_x$, the update rule is:
\[
    C[i, h_i(x)] \leftarrow C[i, h_i(x)] + g_i(x) \cdot c_x, \quad \text{for all } i \in [d].
\]
To estimate the frequency of $x$, $\cs$ returns:
\[
    \hat{f}_x = \mathrm{median}_{i \in [d]} \left( g_i(x) \cdot C[i, h_i(x)] \right).
\]
This estimator is unbiased and has error bounded in terms of the  second moment of the frequency vector.
\begin{lemma}[Approximation error for $\cs$ \cite{charikar2002finding}]
    For \( x \in [n] \) and any \( \beta, \eta \in (0,1) \), the estimation error of Count Sketch with \( d = \lceil \ln(1/\beta) \rceil \) and table width \( \lceil w = 3/\eta^2 \rceil \) satisfies:
    \[
        \Pr\left[ |\hat{f}_x - f_x| > \eta\sqrt{F_2/} \right] \leq \beta,
    \]
    where \( F_2 = \sum_{y \in [n]} f_y^2 \) is the second moment of the frequency vector.
    \label{lem:cs_approx}
\end{lemma}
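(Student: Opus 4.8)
The plan is to reduce the $d$-row median estimator to a single-row second-moment analysis and then amplify the success probability via a Chernoff bound over the $d$ independent rows. First I would fix a single row $i$ and expand the raw counter: since $C[i,h_i(x)] = \sum_{y} g_i(y) f_y \,\mathbf{1}[h_i(y)=h_i(x)]$, multiplying by $g_i(x)$ and using $g_i(x)^2=1$ isolates the true frequency, giving the per-row error
\[
E_i := g_i(x)\,C[i,h_i(x)] - f_x = \sum_{y \neq x} g_i(x)g_i(y)\, f_y \,\mathbf{1}[h_i(y)=h_i(x)].
\]
The role of the signs $g_i$ is that, under pairwise independence, $\expect[g_i(x)g_i(y)]=0$ for $y\neq x$, so $\expect[E_i]=0$ and the single-row estimator is unbiased. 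This is the structural step that separates $\cs$ from $\cms$ and makes a second-moment (rather than first-moment) bound available.

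Next I would compute the variance. Expanding $\expect[E_i^2]$, every cross term with $y\neq z$ carries a factor $\expect[g_i(y)g_i(z)]=0$, so only the diagonal survives, leaving $\expect[E_i^2] = \sum_{y\neq x} f_y^2 \,\prob[h_i(y)=h_i(x)] = \tfrac1w \sum_{y\neq x} f_y^2 \le F_2/w$, where I use that $h_i$ is uniform and pairwise independent, so a collision occurs with probability $1/w$. Plugging in $w = \lceil 3/\eta^2\rceil$ and applying Chebyshev's inequality then yields the per-row tail bound $\prob[\,|E_i| > \eta\sqrt{F_2}\,] \le \expect[E_i^2]/(\eta^2 F_2) \le 1/(w\eta^2) \le 1/3$.

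Finally I would handle the median. Because subtracting the constant $f_x$ commutes with taking a median, the overall error equals $\mathrm{median}_i E_i$, and $|\mathrm{median}_i E_i| > \eta\sqrt{F_2}$ forces at least half of the rows to individually satisfy $|E_i| > \eta\sqrt{F_2}$. Letting $B$ count such ``bad'' rows, we have $\expect[B] \le d/3$, and the failure event is contained in $\{B \ge d/2\}$; a Chernoff bound over the $d$ independent rows then drives $\prob[B \ge d/2]$ below $\beta$ once $d = \Omega(\ln(1/\beta))$, matching the stated $d=\lceil\ln(1/\beta)\rceil$ up to the constant absorbed in the $\mathcal{O}$. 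The main obstacle is this last amplification step: one must verify carefully that a median failure is genuinely implied by a majority of bad rows (the two-sided-to-one-sided reduction) and that the rows are mutually independent, which requires the $d$ pairs $(h_i,g_i)$ to be drawn independently across rows. The variance computation and the Chebyshev step are otherwise routine.
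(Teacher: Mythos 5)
The paper does not prove this lemma; it is imported verbatim from Charikar, Chen, and Farach-Colton \cite{charikar2002finding}, and your argument is precisely the standard proof of that result: unbiasedness from the pairwise-independent signs, a variance bound of $F_2/w$ per row, Chebyshev to get per-row failure probability at most $1/3$, and Chernoff amplification over the $d$ independent rows via the median. Your reasoning is correct, including your own caveat that the stated $d=\lceil\ln(1/\beta)\rceil$ hides the constant required by the Chernoff step (a crude Hoeffding bound needs $d\geq 18\ln(1/\beta)$ to push $\prob[B\geq d/2]$ below $\beta$), a constant the paper's statement likewise suppresses.
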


\begin{figure}
    \centering
    \includegraphics[width=0.6\linewidth]{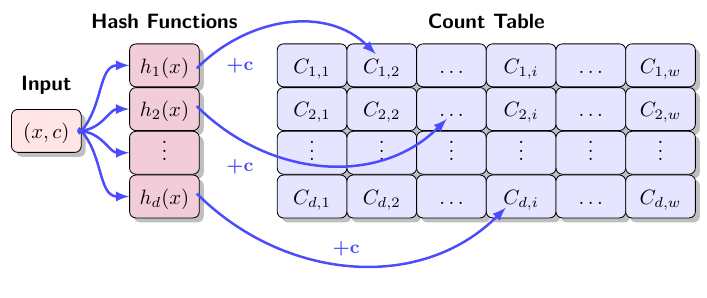}
    \caption{
    The \textsf{update} procedure for a Count-Min Sketch. 
    It follows that~$h_j(x)=i$.
    }
    \label{fig:cms}
\end{figure}

\section{The Gaussian Binary Mechanism}
\label{sec:gauss_bm}

We introduce the Gaussian Binary Mechanism ($\gaussbm)$, a variation of the standard binary mechanism that replaces Laplace noise with Gaussian noise. 
This mechanism serves as the core tool for solving the $n$-counters problem -- a generalization of the standard summing problem \cite{andersson2023smooth,fichtenberger2023constant,henzinger2024unifying}.
We will use the $n$-counters problem to formalize sketching in the continual observation model as a structured counting task. 
Lastly, we derive concrete utility bounds for the collection of counters, which will support our approach to identifying heavy hitters under continual observation.

Continual observation requires injecting noise at every update, rather than just once. 
This raises the challenge of minimizing cumulative error over time.
A standard approach is the \textit{binary mechanism} \cite{chan2011private,dwork2010differential}, which organizes the stream of updates into a binary tree structure.
Each update is represented by a leaf node and internal nodes correspond to the intervals comprised of all descendant leaf nodes. 
An illustration of this binary tree structure is given in Figure~\ref{fig:gauss_bm}.
A query for the value in the counter at time $t$ can be constructed from the tree by aggregating the noisy counts of intervals that partition $[1,t]$.
This reduces noise relative to a naive approach that is equivalent to summing all the noisy counts in the leaf nodes. 

The binary mechanism with Gaussian noise, detailed in Algorithm~\ref{alg:binary_mechanism}, operates by storing the increment $c$ at each time step $t$ as a leaf node with fresh Gaussian noise. 
It then merges tree nodes whose time intervals form complete binary intervals ending at $t$, summing their counts and adding new Gaussian noise at each merge.

\begin{algorithm}[t]
  \SetAlgoLined
  \DontPrintSemicolon
  \SetKwFunction{init}{initializeTree}
  \SetKwFunction{update}{updateTree}
  \SetKwFunction{query}{getNoisySum}
  \SetKwProg{myproc}{define}{}{}
  \SetKwInOut{require}{Require}

  \myproc{$\gaussbm(T, \varepsilon, \delta, m)$}{
        $\mathtt{tree} \gets [\,]$ \;
        $h \gets \lceil \log(T + 1) \rceil$ \;
        $\sigma \gets \varepsilon^{-1} \sqrt{2hm\ln(1.25/\delta)}$ \;
        \KwRet
    }
  
  \myproc{$\mathsf{update}(c, t)$}{
        Append $([t,t], c, \mathcal{N}(0, \sigma))$ to $\mathtt{tree}$ \;
        $l \gets 0$ \;
        \While{$|\mathtt{tree}| > 1$ \textbf{and} $t \,\&\, (1 \ll l)$}{
            $(I_r, c_r, \_) \gets \mathtt{tree}.\mathsf{pop}()$ \;
            $(I_l, c_l, \_) \gets \mathtt{tree}.\mathsf{pop}()$ \;
            $I \gets I_l \cup I_r$; $c \gets c_l + c_r$ \;
            Append $(I, c, \mathcal{N}(0, \sigma))$ to $\mathtt{tree}$ \;
            $l \gets l + 1$ \;
        }
    }

  \myproc{$\mathsf{query}()$}{
        $\mathtt{total} \gets 0$ \;
        \ForEach{$(\_, c, \gamma) \in \mathtt{tree}$}{
            $\mathtt{total} \gets \mathtt{total} + c + \gamma$ \;
        }
        \KwRet $\mathtt{total}$ \;
    }

  \caption{Gaussian Binary Mechanism}
  \label{alg:binary_mechanism}
\end{algorithm}

\begin{figure}
    \centering
    \includegraphics[width=0.6\linewidth]{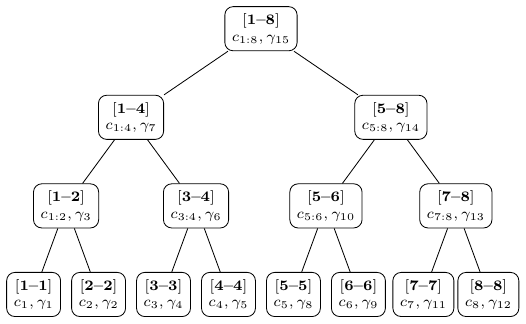}
    \caption{
    The Binary Mechanism. 
    $c_i$ denotes the counter increment at time $i$.
    The sum released at time $t=7$ is $c_7+c_{5:6}+c_{1:4}+ \gamma_7 + \gamma_{10} + \gamma_{11}$.
    }
    \label{fig:gauss_bm}
\end{figure}

\subsection{The $n$-Counters Problem}

Multiple copies of the binary mechanism can be used to provide approximate differential privacy for a collection of counters.
The following formalizes the problem of releasing a collection of counters under a stream of updates.
\begin{problem}
[$n$-Counters]
Let $T \in \mathbb{N}$ be the number of time steps, and let $n \in \mathbb{N}$ be the number of counters. 
The input is a sequence of update vectors $V = (\mathbf{v}_1, \ldots, \mathbf{v}_T)$, where each $\mathbf{v}_t \in \mathbb{Z}^n$ specifies the increment to each counter at time $t$. 
At each time step $t \in [T]$, the goal is to release the prefix sums for each counter:
\[
x_t^{(i)} = \sum_{j=1}^t v_j^{(i)} \quad \text{for all } i \in [n].
\]
\label{prob:m_counters}
\end{problem}

In order to formalize differential privacy for this problem, we introduce a notion of neighboring inputs.
We include some flexibility in the definition and permit a parameterized number of counters to differ.

\begin{definition}
[$m$-Neighboring Inputs]
Two input sequences $V = (\mathbf{v}_1, \ldots, \mathbf{v}_T)$ and $V' = (\mathbf{v}'_1, \ldots, \mathbf{v}'_T)$ in $\mathbb{Z}^{n \times T}$ are said to be \emph{$m$-neighboring}, denoted $ V \overset{m}{\sim} V'$, if there exists a subset $I \subseteq [n]$, with $|I| \leq m$, such that:
\[
\sum_{t=1}^T |v_t^{(i)} - v_t^{\prime(i)}| \leq 1 \quad \text{for all } i \in I,
\]
and
\[
\sum_{t=1}^T |v_t^{(j)} - v_t^{\prime(j)}| = 0 \quad \text{for all } j \in [n]\setminus I,
\]
\label{def:m_neighbors}
\end{definition}
\noindent 
This definition says that at most $m$ counters can differ at a single increment, and that this difference is bounded by one.

The sensitivity of the input streams will be used to scale the variance of the noise injected into the $\gaussbm$ counters.
Following Definition~\ref{def:m_neighbors}, the $\ell_2$ sensitivity of $m$-neighboring inputs $V\overset{m}{\sim} V^{\prime}$ is:
\begin{align*}
    \Delta_2 
    &= \max_{V \overset{m}{\sim} V' } \left( \sqrt{ \sum_{i=1}^m \sum_{t=1}^T \left( v_t^{(i)} - v_t^{(i)\prime} \right)^2 } \right)
    \leq \sqrt{m}.
\end{align*}
The solution to Problem~\ref{prob:m_counters}, using $\gaussbm$ counters to store noisy counts, is given in Algorithm~\ref{alg:m_count_gbm}.
This algorithm has the following privacy gaurantee.

\begin{algorithm}[t]
  \SetAlgoLined
  \DontPrintSemicolon
  \SetKwFunction{algo}{initialise}\SetKwFunction{proc}{update}\SetKwFunction{out}{query}
  \SetKwProg{myproc}{define}{}{}
  \SetKwInOut{require}{Require}
  \myproc{$\mathsf{PrivateCounters}(V,n, m, \varepsilon, \delta)$}{
        \For{$i \in [n]$}
        {
            $O_i \gets \gaussbm(|V|, \varepsilon, \delta, m)$\; 
        }
        \For{$\mathbf{v}_t \in V$}
        {
            \For{$i \in [n]$}
            {
                $O_i.\mathsf{update}(v_t^{(i)})$\;
            }
            \KwRet $\{ O_i \mid i \in[n]\}$\;
        }
        
        }{}
  \caption{$n$-Counters with $\gaussbm$}
  \label{alg:m_count_gbm}
\end{algorithm}

\begin{theorem}[Gaussian Binary Mechanism]
    On input $V$, and for any $\varepsilon > 0$ and $ \delta \in (0,1)$, 
    Algorithm~\ref{alg:m_count_gbm} is $(\varepsilon, \delta)$-differentially private, for $m$-neighboring inputs, in the continual release model. 
    \label{thm:gaussian_counters}
\end{theorem}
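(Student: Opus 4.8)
The strategy is to view the entire continual-observation transcript as a deterministic post-processing of the collection of all noisy node values stored across the $n$ binary trees, and then to bound the $\ell_2$-sensitivity of this collection and invoke the Gaussian mechanism (Lemma~\ref{lem:gaussian_mechanism}). Fix the input length $T$ and observe that the \emph{structure} of each $\gaussbm$ tree --- which dyadic intervals are ever formed --- depends only on $T$ and the arrival times, not on the data values; hence neighboring inputs induce identical tree shapes and differ only in the counts $c$ stored at nodes. Since the noise $\gamma \sim \normaldist(0,\sigma)$ on each node is drawn once and frozen, every query issued at every time step $t \in [T]$, for every counter $i \in [n]$, is a fixed function of the vector of noisy node values $(c + \gamma)$ accumulated so far. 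Therefore the whole output sequence is obtained by post-processing this single noisy vector, and it suffices to establish $(\varepsilon,\delta)$-differential privacy for the release of that vector.

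\textbf{Sensitivity.} First I would compute the $\ell_2$-sensitivity of the (noise-free) vector of all node counts under the $m$-neighboring relation of Definition~\ref{def:m_neighbors}. Fix a single counter $i$ and a pair of inputs differing only in counter $i$; by the definition the total change satisfies $\sum_{t} |v_t^{(i)} - v_t^{\prime(i)}| \le 1$, so over $\mathbb{Z}$ the two streams differ by a single $\pm 1$ increment at one time step $t^\star$. In the binary tree this increment propagates only to the leaf for $t^\star$ and its ancestors, i.e.\ to at most $h = \lceil \log(T+1)\rceil$ distinct nodes, each of whose counts changes by exactly $1$ in absolute value. Hence the squared $\ell_2$-norm of the change in counter $i$'s node vector is at most $h$. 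Because at most $m$ counters may differ and distinct counters occupy disjoint coordinates of the global node vector, the contributions add in squares, giving
\[
\Delta_2 \;\le\; \sqrt{\, m \cdot h \,}.
\]

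\textbf{Applying the Gaussian mechanism.} With $\ell_2$-sensitivity $\Delta_2 \le \sqrt{mh}$, Lemma~\ref{lem:gaussian_mechanism} guarantees $(\varepsilon,\delta)$-differential privacy for the release of the noisy node vector provided each coordinate is perturbed by independent $\normaldist(0,\sigma)$ noise with $\sigma \ge \varepsilon^{-1}\sqrt{2mh\ln(1.25/\delta)}$. This is exactly the value set on Line~\ref{line:sigma_gbm} of Algorithm~\ref{alg:binary_mechanism} (with $m$ passed through by Algorithm~\ref{alg:m_count_gbm}), so the hypothesis holds with equality, and the post-processing reduction from the first paragraph extends the guarantee to the full transcript.

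\textbf{Main obstacle.} I expect the delicate point to be the reduction step rather than the calculation: one must argue carefully that releasing the \emph{entire} history of per-timestep queries leaks no more than releasing the frozen noisy node vector once. This hinges on (i) the tree shape being data-independent, so neighboring inputs are comparable coordinate-by-coordinate, and (ii) the noise being sampled per node and reused across all subsequent queries, so no fresh randomness --- and hence no additional privacy loss --- is introduced as $t$ advances. Establishing that the ``at most $h$ affected nodes, each by exactly $1$'' bound holds uniformly, independent of where $t^\star$ falls in the tree, is the other place where care is needed.
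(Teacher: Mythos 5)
Your proposal is correct and follows essentially the same route as the paper's proof: both reduce the full transcript to a post-processing of the single vector of noisy tree-node counts, bound the $\ell_2$-sensitivity of that vector by $\sqrt{mh}$ via the leaf-to-root path argument, and invoke Lemma~\ref{lem:gaussian_mechanism}. Your additional remarks on the data-independence of the tree shape and the frozen per-node noise make explicit two points the paper leaves implicit, but they do not change the argument.
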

\begin{proof}
    Each counter is implemented with the Gaussian Binary Mechanism.
    Over the course of the stream, each mechanism produces a binary tree with a noisy count of every interval of the stream.
    These binary trees can be processed to produce the exact output sequence of Algorithm~\ref{alg:m_count_gbm}.
    
    Now, suppose the function $\mathcal{F}$ takes $V$ as input and outputs the complete binary tree of each counter stream, with \textit{exact} counts at each tree node.
    On $m$-neighboring inputs $V \overset{m}{\sim} V'$, there are at most $m$ trees that differ in the outputs of $\mathcal{F}(V)$ and $\mathcal{F}(V^{\prime})$.
    We refer to these trees as \textit{neighboring}.
    Each neighboring tree differs in one leaf node and at all internal nodes from this leaf to the root.
    Recall that an internal node aggregates the counts of all descendant leaf nodes.
    By Definition~\ref{def:diff_privacy}, the exact counts of the nodes of neighboring trees differ by at most 1.
    Therefore, as the height of the binary tree is $h= \lceil \log (|V|+1) \rceil$, 
    the $\ell_2$ sensitivity of $\mathcal{F}$, on $m$-neighboring inputs, is $\Delta_2^{\mathcal{F}} =\sqrt{m h}$.

    Let $\mathcal{M}(V, \varepsilon, \delta)$ denote a mechanism that computes $\mathcal{F}(V)$ and adds $ \mathcal{N}(0, \sigma)$, with $\sigma =\varepsilon^{-1}\Delta_2^{\mathcal{F}} \sqrt{\ln(1.25/\delta)}$, to each node count.
    By Lemma~\ref{lem:gaussian_mechanism}, $\mathcal{M}$ is $(\varepsilon, \delta)$-differentially private for $m$-neighboring inputs.
    As the outputs of Algorithm~\ref{alg:m_count_gbm} can be generated from $\mathcal{M}$ as a post-processing step, it  is also $(\varepsilon, \delta)$-differentially private.
\end{proof}

We now consider the maximum additive error out of $n$ $\gaussbm$ counters.
\begin{lemma}
    Let $V$ be a stream of updates to $n$ counters, where $|V| = T$ and $m \in \mathbb{Z}^+$ denotes the sensitivity of the inputs. 
    For any privacy parameters $\varepsilon > 0$ and $\delta \in (0,1)$, let the counters be initialized using $\gaussbm(|V|, \varepsilon, \delta, m)$. 
    Then, for any $\beta \in (0, 0.5)$ and $I \subseteq [T]$, with probability at least $1 - \beta$, the maximum additive error of any counter at any timestep $t \in I$ is bounded by
    \[
        \frac{1}{\varepsilon} (\log T + 1) \sqrt{2m \ln\left(\frac{1.25}{\delta}\right) \ln\left(\frac{2|I|n}{\beta}\right)}.
    \]
    \label{lem:gbm_utility}
\end{lemma}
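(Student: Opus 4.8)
The plan is to bound the additive error of a single $\gaussbm$ counter at a single query time, and then lift this to a bound that holds simultaneously for all $n$ counters and all times in $I$ by a union bound. First I would observe that, by construction of the binary mechanism, the estimate released by a fixed counter at time $t$ equals the \emph{exact} prefix sum plus the sum of the independent Gaussian noise values stored at the tree nodes whose intervals partition $[1,t]$. The number of such nodes is the number of set bits in the binary representation of $t$, which is at most the tree height $h = \lceil\log(T+1)\rceil$. Hence the additive error of a fixed counter at a fixed time is a sum $Y = \sum_{j} \gamma_j$ of at most $h$ i.i.d.\ $\normaldist(0,\sigma^2)$ variables, where $\sigma$ is the scale set during the initialization of $\gaussbm$.

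Next I would apply the Gaussian sum tail bound (Fact~\ref{fact:gaussian_sum}) to this error term. For a fixed failure probability $\beta'$, Fact~\ref{fact:gaussian_sum} yields $\prob[|Y| \geq \sigma\sqrt{2h\ln(2/\beta')}] \leq \beta'$. Since a query may aggregate fewer than $h$ nodes and the threshold is monotone in the number of summands, the same bound with $h$ inside the radical remains valid (if slightly loose) for every $t$, so I can use a single uniform threshold across all query times.

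Then, to obtain a statement holding simultaneously over all $n$ counters and all $t \in I$, I would union bound over the $|I|\,n$ relevant counter--time pairs, choosing $\beta' = \beta/(|I|n)$. This replaces $\ln(2/\beta')$ with $\ln(2|I|n/\beta)$ and gives, with probability at least $1-\beta$, the uniform error bound $\sigma\sqrt{2h\ln(2|I|n/\beta)}$. An important point here is that the union bound requires no independence across query times: the error terms for different $t$ on the same counter reuse the same noise nodes and are therefore correlated, so a Chernoff-type argument over times would be invalid, whereas the union bound is insensitive to this correlation.

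Finally, I would substitute the scale $\sigma$ fixed in the initialization of $\gaussbm$ and use $h \leq \log T + 1$ to collapse the product of the two radicals into the claimed expression $\varepsilon^{-1}(\log T + 1)\sqrt{2m\ln(1.25/\delta)\ln(2|I|n/\beta)}$. The main obstacle I anticipate is bookkeeping rather than conceptual: correctly tracking the variable number of noise terms across query times, and arranging the union bound so that one high-probability event controls the worst case over both the counter index and the entire evaluation set $I$, while being careful to route the cross-time correlation through the union bound rather than an independence argument.
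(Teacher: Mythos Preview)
Your proposal is correct and matches the paper's proof essentially line for line: bound the error of a single counter at a single time as a sum of at most $h=\lceil\log(T+1)\rceil$ i.i.d.\ Gaussians, apply Fact~\ref{fact:gaussian_sum} with failure probability $\beta/(|I|n)$, union bound over the $|I|n$ counter--time pairs, and substitute $\sigma$ with $h\le \log T+1$. Your additional remarks about the cross-time correlation and why the union bound (rather than an independence argument) is the right tool are accurate and add clarity beyond what the paper spells out.
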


\begin{proof}
    Each count released by a Gaussian Binary Mechanism is the sum of at most $h = \lceil \log (T+1) \rceil$ internal node counts with noise $\mathcal{N}(0, \sigma)$.
    By Fact~\ref{fact:gaussian_sum}, a single counter at time $t$, with probability $1-\beta/(|I|n)$, has additive error at most
    \[
        \sigma \sqrt{2h \ln(2 |I| n/ \beta)} = \frac{h}{\varepsilon}\sqrt{2m\ln (1.25/\delta)\ln(2|I|n/\beta)}.
    \]
    Taking a union bound over $n$ counters and $|I|$ time steps, with probability $1-\beta$, this additive error bound holds for all counters at all time steps.
\end{proof}

\section{Lazy Frequency Estimation}
\label{sec:lazy_sketch}

We will first introduce  the Punctual Sketch ($\punctual$) \cite{epasto2023differentially}.
This approach has high computational overhead and acts a straw man.
Subsequently, we will introduce our new framework, $\lazy$, that reduces the computational costs of the $\punctual$.

\begin{algorithm}[t]
  \SetAlgoLined
  \DontPrintSemicolon
  \SetKwFunction{algo}{initialise}\SetKwFunction{proc}{update}\SetKwFunction{out}{query}
  \SetKwProg{myproc}{define}{}{}
  \SetKwInOut{require}{Require}
  \myproc{$\mathsf{PunctualSketch}(X, w,d, \varepsilon, \delta)$}{
        \For{$i \in [d]$}
        {
            Randomly select hash functions $h_i:\Omega\rightarrow [w]$ \;
            For $\cs$ randomly select hash functions $g_i:\Omega \rightarrow \{-1,1\}$\; 
        }
        \For{$(i,j) \in [d] \times [w]$}
        {
            $O[i][j] \gets \mathsf{GaussBM}(|X|, \varepsilon, \delta, 2d)$\;
        }
        \For{$x_t \in X $}
        {
            \For{~$i \in [d]$}
            {
                $h\gets h_i(x)$ \; 
                \For{$j \in [w]$}
                {
                    \If{$j = h$}
                    {
                        $c \gets \begin{cases}
                        g_i(x)  &  \text{for } \cs \\
                         1 & \text{for } \cms
                        \end{cases}$ \;
                    }
                    \Else 
                    {
                        $c \gets 0$
                    }
                    $O[i][j].\mathsf{update}(c, t)$\;
                }
            }
        }
        \KwRet $O$\;
        }{}
  \caption{The $\punctual$}\label{alg:sketch_punctual}
\end{algorithm}

\subsection{The Punctual Sketch}
In contrast to the single release model, the continual observation setting demands a more delicate touch: fresh noise must be added to the sketch after every update, without exception. 
The $\punctual$ meets this stringent requirement using a $d \times w$ array of continual observation counters, denoted $O$.
Upon the arrival of an update $x$, the sketch proceeds as usual, incrementing the counters $O[i][h_i(x)]$ for all $i \in [d]$, in accordance with the underlying sketching technique. 
However, continual observation privacy requires more than selective updates. 
Even those counters not directly touched by $x$ -- that is, all $O[i][j]$ where $j \neq h_i(x)$ -- must also be refreshed, each receiving an update with value zero. 
This ensures that every cell in the sketch is revisited and re-noised at each time step, in a precisely timed, punctual manner.
This update strategy is formalized in Algorithm~\ref{alg:sketch_punctual}.

The $\punctual$ serves as a concrete instance of the $(dw)$-Counters problem (Problem~\ref{prob:m_counters}).
To understand its privacy guarantees, consider two input streams, $X$ and $X'$, that differ only at a single point: the $l^{\text{th}}$ update, where $x_l \neq x_l'$. 
For every time step $t \neq l$, the input to each counter in the array $O$ is identical under both $X$ and $X'$. 
The distinction arises solely at time $l$.

Now, fix a row $i \in [d]$. 
If the hash functions differ on the $l^{\text{th}}$ update --that is, if $h_i(x_l) \neq h_i(x_l')$ -- then the counters $O[i][h_i(x_l)]$ and $O[i][h_i(x_l')]$ receive different updates. 
Specifically, one receives update $\pm 1$, while the other receives a zero. 
This results in a unit difference in the input stream to each affected counter.
Since this discrepancy can occur in each of the $d$ rows, the total input difference across the array amounts to $2d$ neighboring streams. 
To account for this, each counter is instantiated using  $\gaussbm(|X|, \varepsilon, \delta, 2d)$.
By Theorem~\ref{thm:gaussian_counters}, this ensures that the $\punctual$ satisfies $(\varepsilon, \delta)$-differential privacy.

This mechanism of punctual updates to the output counters stands in contrast to our approach, which lazily delays updates to the output to increase throughput.
Although the $\punctual$ is sufficiently private, it is not practical as the update time is at least linear in the width of the sketch.
A goal of this work is to remove this dependence.

The query procedure is presented in Algorithm~\ref{alg:sketch_query}.
It applies the standard query procedures for non-private sketches to the array of continual observation counters.

\begin{algorithm}[t]
  \SetAlgoLined
  \DontPrintSemicolon
  \SetKwFunction{algo}{initialise}\SetKwFunction{proc}{update}\SetKwFunction{out}{query}
  \SetKwProg{myproc}{define}{}{}
  \SetKwInOut{require}{Require}
    \myproc{$\mathsf{query}(x)$}{
        \For{$i \in [d]$}
        {
            $j \gets h_i(x)$\;
            $v_i \gets  \begin{cases}
                    g_i(x) \cdot O[i][j].\mathsf{query}() &  \text{for } \cs \\
                     O[i][j].\mathsf{query}() & \text{for } \cms
                    \end{cases}$
        }  
        \KwRet $\begin{cases}
                   \text{median } \{ v_i \mid i \in [d] \}  &  \text{for } \cs \\
                    \min \{ v_i \mid i \in [d] \} & \text{for } \cms
                \end{cases}$ \;
        }{}
  \caption{CO Sketch Query}\label{alg:sketch_query}
\end{algorithm}

\subsection{The Lazy Sketch}
In the $\punctual$, \textit{every} cell is touched at \textit{every} stream update, recieving a value from $\{-1,0,1\}$.
Therefore, the time complexity for each update is $\Omega(dw)$.
To overcome this dependence on $w$ we utilize \textit{lazy} updates to the output counters.
The key idea is to relax the update schedule by partitioning the stream of inputs to each continual observation counter into intervals. 
During an interval, we accumulate the count for the counter without publishing it to the output.
When an interval ends, we push the count from the current interval to the corresponding output counter.
We refer to this process as the $\lazy$. 

Formally, the $\lazy$ contains a $d\times w$ array of \textit{exact} counters $P$ and a $d\times w$ array of \textit{continual observation} counters $O$.
The array $P$ maintains the exact count for each cell in its current interval and $O$ contains a noisy count for each cell aggregated from all prior intervals.
The array $O$ constitutes the output of the mechanism and $P$ is never revealed to the public.

The mechanism updates $P$ as it would a non-private sketch.
For example, on update $x_t$, using the $\cms$ technique, $\forall i \in [d]$, $P[i][h_i(x_t)] = P[i][h_i(x_t)] + 1$.
These counts are then lazily pushed to the output.
Per update, we push one column from $P$ to the continual observation counters $O$, selected according to a \textit{round robin} scheme.
Formally, for $\mathtt{pos} = t \mod w$, $\forall i \in [d]$, each output counter $O[i][\mathtt{pos}]$ gets updated with the value $P[i][\mathtt{pos}]$.
This constitutes the conclusion of an interval for cells in column $\mathtt{pos}$ and $P[i][\mathtt{pos}]$ gets reset to $0$, $\forall i \in [d]$.
The update procedure is illustrated in Figure~\ref{fig:lazy_sketch}.

\begin{algorithm}[t]
  \SetAlgoLined
  \DontPrintSemicolon
  \SetKwFunction{algo}{initialise}\SetKwFunction{proc}{update}\SetKwFunction{out}{query}
  \SetKwProg{myproc}{define}{}{}
  \SetKwInOut{require}{Require}
  \myproc{$\lazy(X,d,w, \varepsilon, \delta)$}{
        \For{$i \in [d]$}
        {
            Randomly select hash functions $h_i:\Omega\rightarrow [w]$ \;
            For $\cs$ randomly select hash functions $g_i:\Omega \rightarrow \{-1,1\}$\; 
        }
        $\mathtt{pos}\gets 0$\;
        \For{$(i,j) \in [d] \times [w]$}
        {
            $P[i][j] \gets 0$\;
            $O[i][j] \gets \mathsf{GaussBM}(|X|/w,\varepsilon, \delta, 2d)$\;
        }
        \For{$x_t \in X$}
        {
            \For{~$i \in [d]$}
            {
                $c \gets 
                    \begin{cases}
                        g_i(x_t)  &  \text{for } \cs \\
                        1 & \text{for } \cms
                    \end{cases}$ \;
                $P[i][h_i(x_t)] \gets P[i][h_i(x_t)] + c$\; 
                \tcp{update $O$ at $\mathtt{pos}$}
                $O[i][\mathtt{pos}].\mathsf{update}(P[i][\mathtt{pos}], \lfloor t/w\rfloor +1)$\;
                $P[i][\mathtt{pos}] \gets 0$\;
            }
            $\mathtt{pos} \gets \mathtt{pos} + 1 \mod w$\;
            \KwRet $O$\;
        }
        }{}
  \caption{The $\lazy$}\label{alg:circular_sketch}
\end{algorithm}

An update involves $2d$ changes to $P$ and $d$ changes to $O$.
Thus, we drop dependence on $w$ in the update complexity.
The width of an interval for each cell has length $w$.
Updates to the output are lazily delayed by the intervals.
Therefore, there is an additive error of at most $w$ due to lazy updates.
However, as only one cell in each row of $P$ gets incremented at an update, the expected value of the count of a cell in each interval is $1$.
Formally, for $\mathtt{pos} = t \mod w$ and row $i$,
\begin{align*}
    \mathbbm{E}_{h_i}\left[P[i][\mathtt{pos}] \right] 
    &=
    \sum_{q= t-\mathtt{pos}}^{t} \mathbbm{1}[h_i(x_q) = \mathtt{pos}]
    \leq \sum_{q= t-\mathtt{pos}}^{t} \frac{1}{w} = 1.
\end{align*}
Therefore, the additive error observed from the delay is small on average.

In addition, the amount of noise added to the output counters in the $\lazy$ is smaller than the amount of noise added to the output counters in the $\punctual$.
This is because each output counter now receives a stream of updates of length $T/w$.
Thus, noise is only added every $w$ updates.
This is in contrast to the $\punctual$, where noise is added to each cell at every update.

The $\lazy$ utilizes the same query procedure (Algorithm~\ref{alg:sketch_query}) as the $\punctual$.

\begin{figure}
    \centering
    \includegraphics[width=0.6\linewidth]{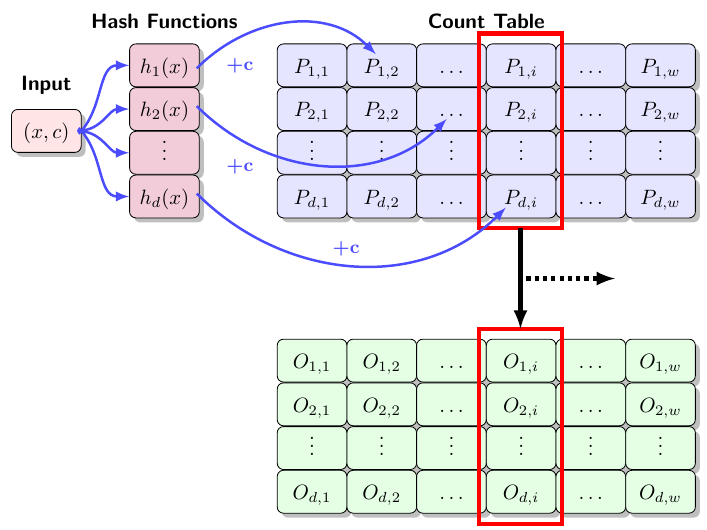}
    \caption{
    Update procedure for the $\lazy$.
    At time $t$, column $i = t\mod w$ from $P$ gets pushed to the output sketch.
    }
    \label{fig:lazy_sketch}
\end{figure}

\subsection{Properties}

We begin with the privacy result.

\begin{theorem}[$\lazy$ Privacy]
    Algorithm~\ref{alg:circular_sketch} is $(\varepsilon, \delta)$-differentially private.
    \label{thm:lazysk_privacy}
\end{theorem}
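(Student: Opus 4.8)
The plan is to reduce the privacy of the $\lazy$ to Theorem~\ref{thm:gaussian_counters}, exactly as was done for the $\punctual$. The released object of Algorithm~\ref{alg:circular_sketch} is the array $O$ of $dw$ continual observation counters, while the exact array $P$ is never revealed. Each counter $O[i][j]$ receives a single push per interval of length $w$, so over a stream of length $T = |X|$ it is fed a derived stream of length $T/w$, consistent with its initialization as $\gaussbm(|X|/w, \varepsilon, \delta, 2d)$. The task is therefore to identify the stream of updates actually delivered to the $dw$ counters and to bound, for neighboring inputs $X \sim X^{\prime}$, how many of these derived streams differ and by how much.

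First I would fix the data-independent hash functions $h_i$ (and $g_i$ for $\cs$) and view the map $X \mapsto O$ as post-processing applied to a $(dw)$-Counters instance whose update vectors are the round-robin column pushes $P[i][\mathtt{pos}]$. Then I would take neighboring streams $X \sim X^{\prime}$ differing only at position $l$ and trace the effect of this single discrepancy. For each row $i$, the time-$l$ increment lands in column $h_i(x_l)$ under $X$ and in column $h_i(x_l^{\prime})$ under $X^{\prime}$; since the streams agree at every other step and $P[i][\cdot]$ is reset to $0$ after each push, this discrepancy is \emph{confined to the single interval containing} $l$ and alters the value pushed to $O[i][h_i(x_l)]$ (and to $O[i][h_i(x_l^{\prime})]$) in exactly one of that counter's $T/w$ updates. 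Hence at most two counters per row, and $2d$ counters in total, have differing derived streams, each differing at a single update by the same bounded amount that arises in the $\punctual$ sensitivity argument. This is precisely the $2d$-neighboring condition of Definition~\ref{def:m_neighbors}.

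With the derived streams shown to be $2d$-neighboring, I would invoke Theorem~\ref{thm:gaussian_counters}: since the counters carry sensitivity parameter $m = 2d$ and length $|X|/w$, the array $O$ is $(\varepsilon, \delta)$-differentially private for $2d$-neighboring inputs. Because any neighboring $X \sim X^{\prime}$ induces $2d$-neighboring counter streams for every fixed choice of hash functions, and the sketch output depends only on $O$, post-processing yields that the $\lazy$ is $(\varepsilon, \delta)$-differentially private; averaging over the data-independent hash randomness preserves the guarantee.

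The main obstacle I expect is the sensitivity step, namely certifying that the lazy (batched) updates do not amplify sensitivity relative to the punctual scheme. The delicate point is that accumulating counts in $P$ before a single delayed release could, a priori, let one stream difference influence several of a counter's updates; the argument that the difference stays trapped in one length-$w$ interval per affected column---because $P$ is reset on each push and the two streams coincide elsewhere---is what pins the number of affected updates, and thus the induced neighboring relation, at $2d$ rather than something larger. I would also verify the timing bookkeeping: that each column is pushed exactly once per interval via the round-robin $\mathtt{pos}$, and that the $\gaussbm$ time index $\lfloor t/w\rfloor + 1$ indexes a derived stream of length $T/w$.
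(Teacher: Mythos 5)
Your proposal is correct and follows essentially the same route as the paper: identify the derived streams pushed to the $dw$ output counters (one update per length-$w$ interval, so length $T/w$), argue that a single discrepancy at position $l$ affects at most two counters per row at exactly one of their updates with difference bounded by $1$ (hence the streams are $2d$-neighboring per Definition~\ref{def:m_neighbors}), and invoke Theorem~\ref{thm:gaussian_counters} plus post-processing. Your explicit observation that the reset of $P$ after each push confines the discrepancy to the single interval containing $l$ is exactly the point the paper's proof relies on when it bounds $|c_{\hat{k}}^{(i,j)}-c_{\hat{k}}^{\prime (i,j)}|$.
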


\begin{proof}
    The goal is to show that Algorithm~\ref{alg:circular_sketch} is an instance of the $(dw)$-counters problem and that input streams to the output counters are $2d$-neighboring.
    The privacy result then follows from Theorem~\ref{thm:gaussian_counters}.
    
    The output counter $O[i][j]$ receives a stream of periodic updates from $P[i][j]$ that occur at times $t$, where $t \mod w = j$.
    Let $S^{(i,j)}= c_1^{(i,j)}, c_2^{(i,j)}, \ldots$ denote the stream of counts that update $O[i][j]$ on input $X$.
    The $k^{\text{th}}$ update to output column $j$ is a function of the stream arrivals on the interval $I_k^{(j)} = (k\lfloor t/w\rfloor +j-w, \ldots, k\lfloor t/w\rfloor +j]$.
    Specifically,
    \[
        c_k^{(i,j)} = \sum_{t \in I_k^{(j)}} \mathbf{1}[h_i(x_t) = j].
    \]
    The set of streams $S=\{S^{(i,j)} \mid (i,j) \in [d]\times [w]\}$ denotes all the updates to the output counters.
    Although the intervals $I_k^{(j_1)}$ and $I_k^{(j_2)}$ are disjoint, for $j_1 \neq j_2$, we can treat $c_k^{(i,j_1)}$ and $c_k^{(i,j_2)}$ as occurring at the same time.
    This is because each output counter receives one update (its $k^{\text{th}}$ update) in the time window $[kw,(k+1)w-1]$.
    Thus, $S$ is an instance of the $(dw)$-counters problem.
    
    Consider two streams $X$ and $X^{\prime}$ that only differ at the $l^{\text{th}}$ update, where $x_l \neq x_l^{\prime}$.
    The cells
    \[
        A = \{ (i,j) \mid (h_i(x_l) = j \lor h_i(x_l^{\prime}) = j) \land h_i(x_l) \neq h_i(x_l^{\prime})\}
    \]
    will experience different input streams from $X$ and $X^{\prime}$.
    Specifically, for $(i,j) \in A $ and $\hat{k}$ such that $l \in I_{\hat{k}}^{(j)}$, $|c_{\hat{k}}^{(i,j)}-c_{\hat{k}}^{\prime (i,j)} | =1$.
    As $|A| \leq 2d$, the sets of streams $S$ and $S^{\prime}$, generated by inputs $X$ and $X^{\prime}$, are $2d$-neighboring.

    By Theorem~\ref{thm:gaussian_counters}, an instance of the $(dw)$-counters problem, on a stream of length $\lceil T/w \rceil $ is $(\varepsilon, \delta)$-differentially private on $2d$-neighboring inputs if each counter is initialized as $\mathsf{GaussBM}(\lceil T/w \rceil,\varepsilon, \delta, 2d)$
    Thus, Algorithm~\ref{alg:circular_sketch} is $(\varepsilon, \delta)$-differentially private.
\end{proof}

With privacy established, we now present the utility and performance of the $\lazy$ under both the $\cms$ and $\cs$ techniques.

\begin{lemma}[$\lazy$ Utility]
    Let $\beta \in (0, 0.5)$ and $\eta \in (0,1)$. 
    On input $X \in \mathcal{U}^T$,
    at all times $t \in [T]$ and for any $a \in \mathcal{U}$, the following guarantees hold with probability at least $1 - \beta$:
    \begin{itemize}
        \item \textbf{$\lazycms$}, with dimensions $d= \lceil \log \tfrac{2T}{\beta} \rceil $ and $w = \lceil 2/\eta \rceil$,  returns an estimate $\hat{f}_{a,t}$ such that:
        \[
         -\gamma - w   \leq \hat{f}_{a,t} - f_{a,t} \leq \eta t + \gamma.
        \]

        \item \textbf{$\lazycs$}, with dimensions $d= \lceil \log \tfrac{2T}{\beta} \rceil $ and $w = \lceil 3/\eta^2 \rceil$,  returns an estimate $\hat{f}_{a,t}$ such that:
        \[
            |\hat{f}_{a,t} - f_{a,t}| \leq \eta\sqrt{F_{2,t}} + \gamma + w.
        \]
    \end{itemize}
    Here, the noise term $\gamma$ is defined as:
    \[
        \gamma = \frac{3\log (T/w)}{\varepsilon} \sqrt{ d \ln\left( \frac{2Td }{\beta} \right) \ln\left( \frac{1.25}{\delta} \right) }.
    \]
    Both algorithms support update and query operations in time $\mathcal{O}\left(\log\tfrac{T}{\beta } \log \tfrac{T}{w}\right)$ and use $\mathcal{O}\left(w \log \tfrac{T}{\beta } \log \tfrac{T}{w}\right)$ words of memory.
    \label{thm:lazy_utility}
\end{lemma}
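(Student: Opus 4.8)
The plan is to decompose the estimation error of the $\lazy$ into three sources — the hash-collision error of the underlying non-private sketch, the error introduced by the lazy delay, and the Gaussian noise injected for privacy — and to bound each separately before recombining them via a union bound. Concretely, I would introduce an intermediate quantity $\bar f_{a,t}$: the estimate returned by an \emph{exact} (noiseless) copy of the lazy sketch, i.e., the $\min$ (for $\cms$) or $\mathrm{median}$ (for $\cs$) taken over the true counts currently stored in the output array $O$, which reflect only completed intervals. The final private estimate $\hat f_{a,t}$ then differs from $\bar f_{a,t}$ only through the per-counter Gaussian noise, while $\bar f_{a,t}$ differs from the true frequency $f_{a,t}$ only through collisions and lazy delay.

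First I would control the noise gap $|\hat f_{a,t} - \bar f_{a,t}|$. Each queried counter is a $\gaussbm$ run on a substream of length $\lceil T/w\rceil$ with sensitivity $m = 2d$, over $n = dw$ counters, so Lemma~\ref{lem:gbm_utility} — instantiated with $I = [T]$ and failure budget $\beta/2$ — bounds the additive noise in \emph{every} counter at \emph{every} timestep simultaneously. Since $\min$ and $\mathrm{median}$ are $1$-Lipschitz under $\ell_\infty$ perturbations of their arguments, the same bound carries over to $|\hat f_{a,t} - \bar f_{a,t}|$. A short calculation — absorbing the $(\log(T/w)+1)$ tree height, the factor from $\sqrt{2m}=\sqrt{4d}$, and the extra $\ln w$ inside the logarithm into the leading constant $3$ — shows the resulting quantity is at most the stated $\gamma$.

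Next I would bound $\bar f_{a,t} - f_{a,t}$, separating collisions from delay. The delay error is the key structural observation: between two consecutive pushes of a fixed column exactly $w$ timesteps elapse, so the not-yet-flushed exact counter $P[i][j]$ holds the signed contribution of at most $w$ stream items; hence each entry of $O$ differs from the corresponding cell of a standard (non-lazy) sketch by at most $w$ in absolute value. For $\lazycms$ these increments are all $+1$ and collisions are non-negative, so the delay can only \emph{decrease} the estimate: the upper bound $\hat f_{a,t} - f_{a,t} \le \eta t + \gamma$ follows from Lemma~\ref{lem:cms_approx} applied to the full sketch with $F_{1,t} = t$, union-bounded over the $T$ timesteps (fixing $d = \lceil \log(2T/\beta)\rceil$), while the lower bound picks up only the $-w$ delay term beyond the $-\gamma$ noise. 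For $\lazycs$ the per-row estimates may err in either direction, so the delay contributes $\pm w$; combining the $1$-Lipschitzness of the median with Lemma~\ref{lem:cs_approx} (again union-bounded over $[T]$) gives $|\bar f_{a,t} - f_{a,t}| \le \eta\sqrt{F_{2,t}} + w$. Taking a union bound over the noise and collision events, each of probability $\beta/2$, yields the claimed guarantees with total failure probability $\beta$.

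Finally, the complexity bounds are read off from the data structure: each update performs $2d$ exact-counter edits and $d$ $\gaussbm$ updates on streams of length $\lceil T/w\rceil$, and a query reads $d$ counters; since each $\gaussbm$ operation touches $\mathcal{O}(\log(T/w))$ tree nodes and $d = \lceil\log(2T/\beta)\rceil$, both operations run in $\mathcal{O}(\log\tfrac{T}{\beta}\log\tfrac{T}{w})$ time, and the $dw$ counters store $\mathcal{O}(\log(T/w))$ nodes each, giving the memory bound. I expect the main obstacle to be handling the three error sources simultaneously through the non-linear aggregation: care is needed to argue that the lazy delay is a one-sided $\le w$ undercount for $\cms$ (affecting only the lower bound) but a two-sided $\pm w$ error for $\cs$, and that the $\min$ and $\mathrm{median}$ operations preserve the additive noise and delay bounds rather than amplifying them.
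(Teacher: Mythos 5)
Your proposal is correct and follows essentially the same route as the paper: the identical three-way decomposition into collision, lazy-delay, and noise error, with Lemma~\ref{lem:cms_approx}/\ref{lem:cs_approx} union-bounded over $[T]$ for collisions, Lemma~\ref{lem:gbm_utility} on the $dw$ counters of length $\lceil T/w\rceil$ for the noise, the one-sided $[-w,0]$ versus two-sided $\pm w$ delay distinction, and the same accounting for update time and memory. Your explicit intermediate quantity $\bar f_{a,t}$ and the $1$-Lipschitzness of $\min$/$\mathrm{median}$ merely formalize steps the paper leaves implicit.
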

\begin{proof}
    We start with the error bounds.
    There are three sources of error in the sketch: (1) collisions; (2) lazy updates; and (3) privacy noise.
    We look at each source of error separately and begin with the error due to collisions.
    Algorithm~\ref{alg:circular_sketch} incorporates lazy updates to the output.
    Therefore, the input stream to the output sketch $O$ is a subset of the input stream to a non-private sketch.
    As a result,
    the error due to collisions is bounded by the error in an equivalent non-private sketch.
    
    For an item $a \in \mathcal{U}$,
    let $\hat{f}_{a,t}^{\mathsf{alg}}$ denote the estimate from algorithm $\mathsf{alg}$ at time $t$.
    With sketch depth $d = \ln \tfrac{2T}{\beta}$ and width $w= \lceil e/\eta \rceil$,  by Lemma~\ref{lem:cms_approx}, with probability $1-2\beta/T$, $\lazycms$ has collision error 
    \begin{align*}
        0\leq\hat{f}_{a,t}^{\mathsf{CMS}} - f_a \leq \eta t.
    \end{align*}
    By a union bound, this additive error holds at all times $t\in[T]$ with probability $1-\beta/2$.
    Similarly, by Lemma~\ref{lem:cs_approx}, with probability $1-\beta/2$, $\lazycs$ has collision error
    \begin{align*}
        |\hat{f}^{\mathsf{CS}} - f_a| \leq [-\eta\sqrt{F_2}, \eta\sqrt{F_2}],
    \end{align*}
    at all times $t\in[T]$.

    An output counter is updated every $w$ updates.
    The maximum count of an item $a$ between updates to an output counter is $w$. 
    Therefore, the error due to lazy updates is at least $-w$ and at most $0$ for $\lazycms$.
    As the $\cs$ can have negative increments, the $\lazycs$ observes an additive error of $[-w,w]$ incurred by lazy updates.

    Let $\mathsf{err}_{i,j}$ denote the size of the noise  perturbation in counter $O_{i,j}$.
    By Lemma~\ref{lem:gbm_utility}, with probability $1-\beta/2$, the maximum additive error from $dw$ $\gaussbm$ counters, on $2d$-neighboring streams of length $\lceil T/w \rceil$, is
    \begin{align*}
        \max_{i,j}|\mathsf{err}_{i,j}|
        &= \frac{ \log\lceil\tfrac{T}{w}\rceil +1}{\varepsilon}\sqrt{2d \ln \left(\frac{1.25}{\delta} \right)\ln \left(\frac{2 \lceil \tfrac{T}{w}\rceil \cdot dw }{\beta} \right)} \\
        &\leq  \frac{3\log \tfrac{T}{w}}{\varepsilon} \sqrt{d \ln\left( \frac{2T d}{\beta} \right) \ln\left( \frac{1.25}{\delta} \right) }\\
        &= \gamma.
    \end{align*}
    Adding everything together, it follows that, with probability $1-\beta$, the error in the $\lazycms$ is 
    \begin{align*}
        \hat{f}_a^{\lazycms} - f_a &\in [0, F_1/w] + [-w,0] + [-\gamma, \gamma],
    \end{align*}
    as required.
    Similarly, with probability $1- \beta$, the error in the $\lazycs$ is 
    \begin{align*}
        \hat{f}_a^{\lazycs} - f_a &\in \pm \left(\sqrt{\frac{3F_2}{w}} +w +\gamma \right).
    \end{align*}

    On each update, there are $d$ updates to $P$ and $d$ updates to $O$.
    Each update to $P$ is constant and an update to $O$ is the cost of updating $\gaussbm$ counter.
    On a stream of length $T^*$, this cost is $\mathcal{O}(\log T^*)$.
    Therefore, the update complexity is dominated by the cost of an update to $O$, which takes $\mathcal{O}\left(\log\tfrac{T}{\beta } \log \tfrac{T}{w}\right)$ time.
    Similarly, the memory cost is dominated by $O$.
    By Lemma~\ref{lem:gbm_utility}, the memory cost of $O$ is $\mathcal{O}\left(w\left(\log\tfrac{T}{\beta } \log \tfrac{T}{w}\right)\right)$.
    This completes the proof.
\end{proof}


\section{Experiments}
\label{sec:experiments}

In this paper, we have proposed a novel differentially private linear sketch, designed for the continual observation model.
It is aimed at improving throughput relative to existing art. 
To evaluate this claim empirically, we have implemented a suite of data structures and performed a comprehensive analysis of their privacy, utility, and performance trade-offs. 

\subsection{Implemented Sketches}
The following sketches are implemented and evaluated for frequency estimation:
\begin{itemize}
    \item $\cms$ and $\cs$ serve as non-private baselines.
    \item The \textit{Lazy} variants ($\lazycms$, $\lazycs$).
    \item The \textit{Punctual} variants ($\punctualcms$, $\punctualcs$).
\end{itemize}
Unless stated otherwise, all sketches are implemented with a fixed depth of $d = 3$. 

\subsection{Datasets and Metrics}
\textbf{Synthetic Data.} 
To evaluate frequency estimation, we generate synthetic data streams following Zipfian distributions with a varying skew parameter.
Each synthetic stream consists of $2^{20}$ updates. The privacy parameter $\delta$ is fixed to $0.001$ across all experiments. 

We vary three key parameters in our experiments: 
The privacy parameter $\varepsilon$; 
the sketch width and sketch depth (memory);
and the skew of the input distribution (for synthetic data generation).
For evaluation we record:
\begin{itemize}
    \item {Runtime:} Total time taken to process all $2^{20}$ stream updates.
    \item {Accuracy:} Measured as \emph{Average Relative Error (ARE)} for the top-$15$ most frequent items in the stream:
    \[
    \mathrm{ARE} = \frac{1}{k} \sum_{i=1}^{k} \frac{|f_i - \hat{f}_i|}{f_i},
    \]
    where $f_i$ is the true frequency and $\hat{f}_i$ is the estimated frequency of the $i$-th heavy item.
\end{itemize}

\textbf{Real-World Data.} 
For heavy hitter detection, we use anonymized traffic traces from the CAIDA 2019 dataset\footnote{\url{https://www.caida.org/data/passive/passive\_2019\_dataset.xml}}.
The dataset consists of approximately $26.6$ million total source IP addresses and around $312{,}000$ distinct source IPs.
For all experiments we set $k=256$.
We vary two key parameters:
memory, by changing $\tilde{k}$ for $\lazyhh$ and changing the width for the $\cms$ (and keeping $k$ constant); and privacy, by changing $\varepsilon$.

For evaluation, we report:
\begin{itemize}
    \item {ARE:} For heavy hitters returned by each sketch.
    \item {Precision:} The proportion of reported items that are actual heavy hitters:
    \[
    \mathrm{Precision} = \frac{|\text{Reported} \cap \text{True}|}{|\text{Reported}|}
    \]
    \item {Recall:} The proportion of true heavy hitters that were reported:
    \[
    \mathrm{Recall} = \frac{|\text{Reported} \cap \text{True}|}{|\text{True}|}
    \]
\end{itemize}


\subsection{Experimental Environment}
All implementations are written in C++. 
Experiments were conducted on a Dell Latitude 7430 running Windows 11, equipped with an Intel Core i7 processor and 32GB of RAM. 
Each parameter configuration is evaluated over 20 independent trials, and we report averaged results.

\subsection{Synthetic Data}

\begin{figure*}[t]
    \centering
    \begin{subfigure}[b]{0.32\textwidth}
        \includegraphics[width=\textwidth]{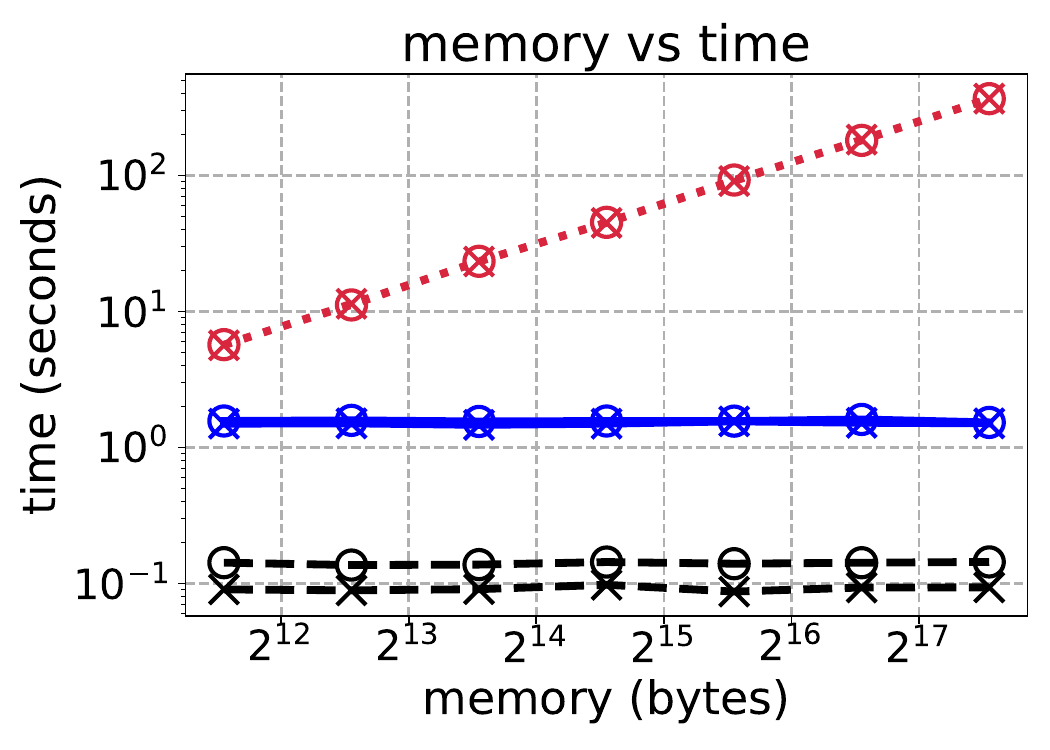}
        \caption{$\varepsilon = 0.3$, skew = 1.3}
        \label{fig:mem_freq_time}
    \end{subfigure}
    \hfill
    \begin{subfigure}[b]{0.32\textwidth}
        \includegraphics[width=\textwidth]{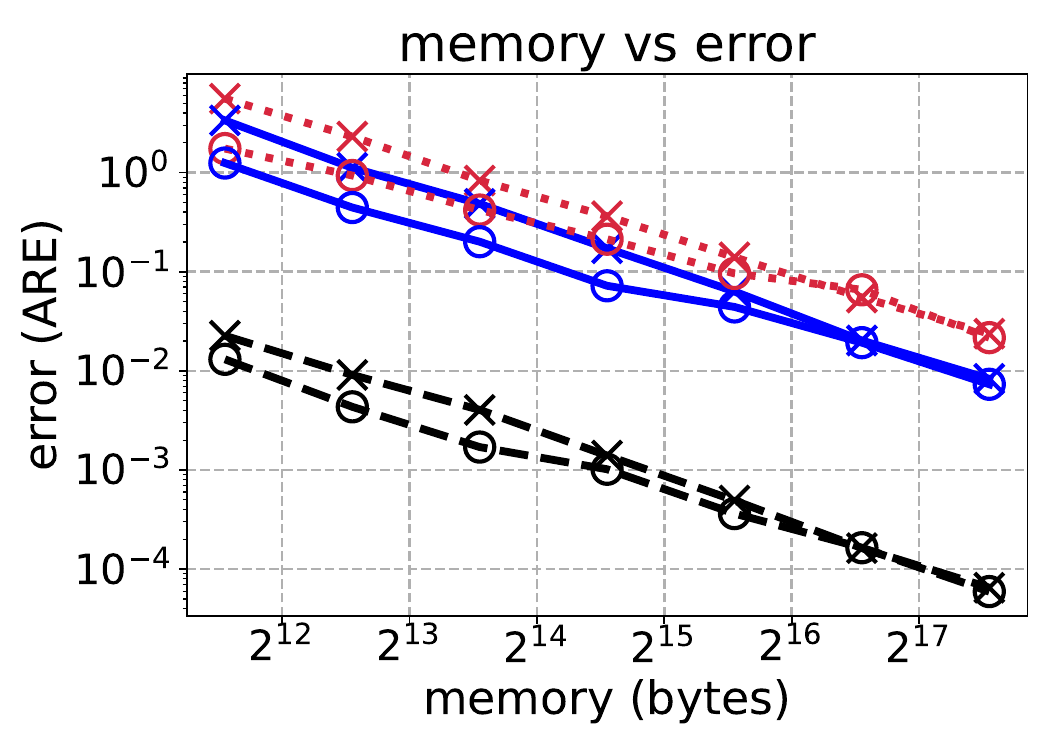}
        \caption{$\varepsilon = 0.3$, skew = 1.3}
        \label{fig:mem_freq_are}
    \end{subfigure}
    \hfill
    \begin{subfigure}[b]{0.32\textwidth}
        \includegraphics[width=\textwidth]{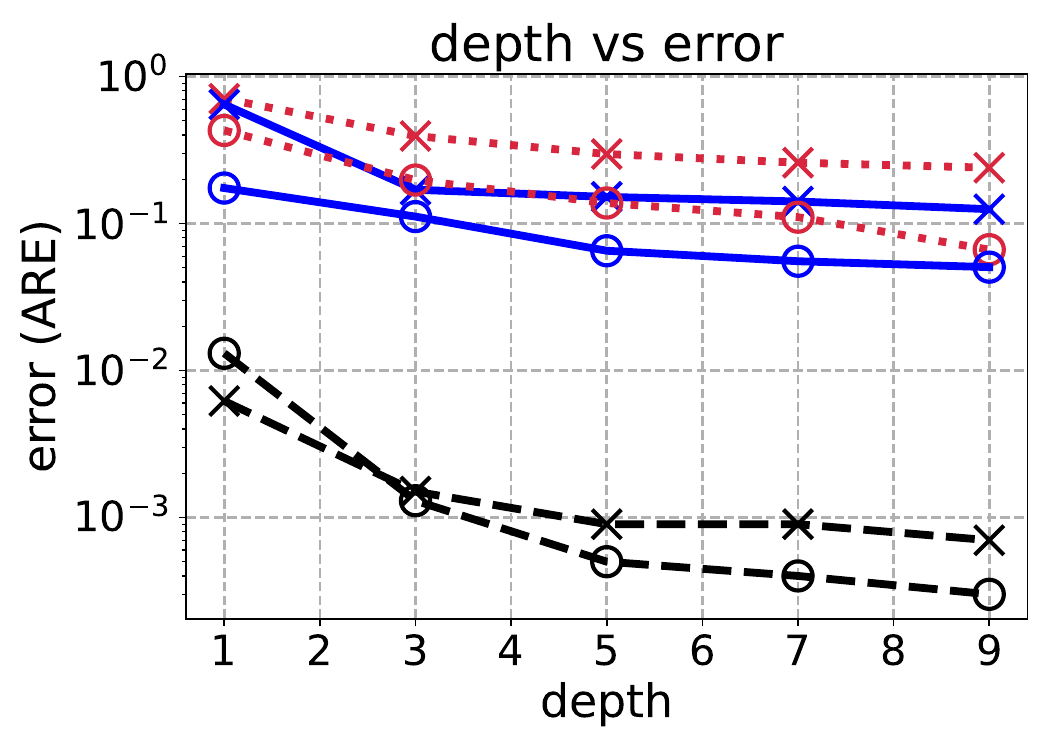}
        \caption{$\varepsilon = 0.3$, skew = 1.3}
        \label{fig:depth_freq_are}
    \end{subfigure}

    \vspace{0.5cm}

    \begin{subfigure}[b]{0.32\textwidth}
        \includegraphics[width=\textwidth]{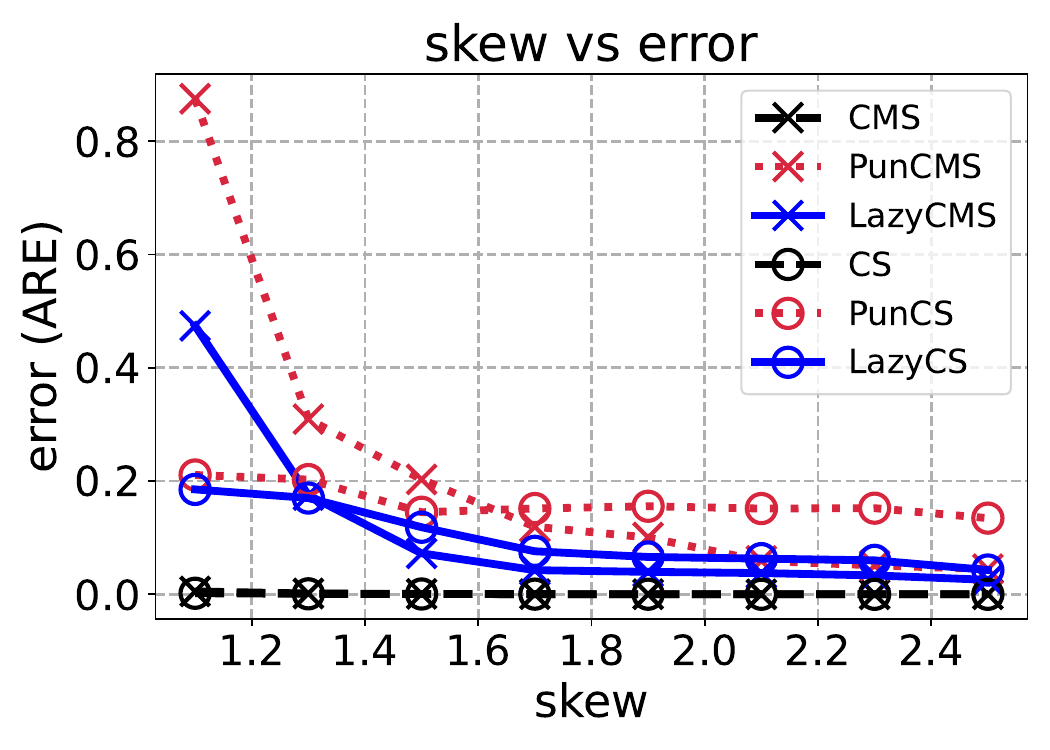}
        \caption{Memory = $24$KB, $\varepsilon=0.3$}
        \label{fig:skew_are}
    \end{subfigure}
    \hfill
    \begin{subfigure}[b]{0.32\textwidth}
        \includegraphics[width=\textwidth]{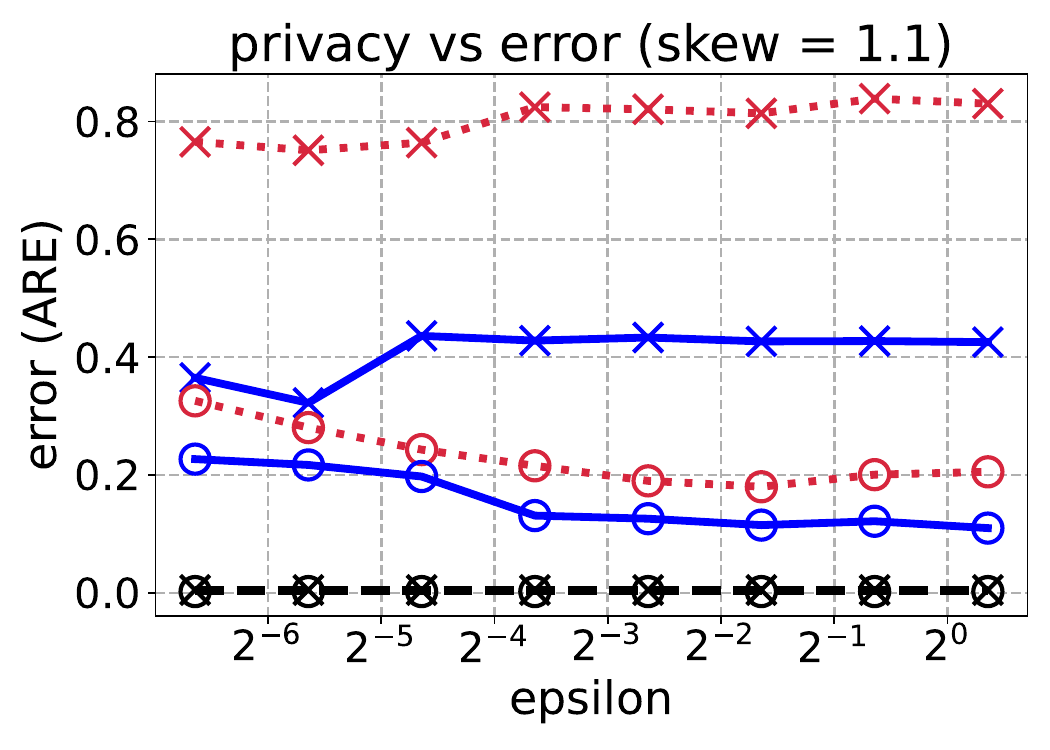}
        \caption{Memory = $24$KB, skew = 1.1}
        \label{fig:eps_are_11}
    \end{subfigure}
    \hfill
    \begin{subfigure}[b]{0.32\textwidth}
        \includegraphics[width=\textwidth]{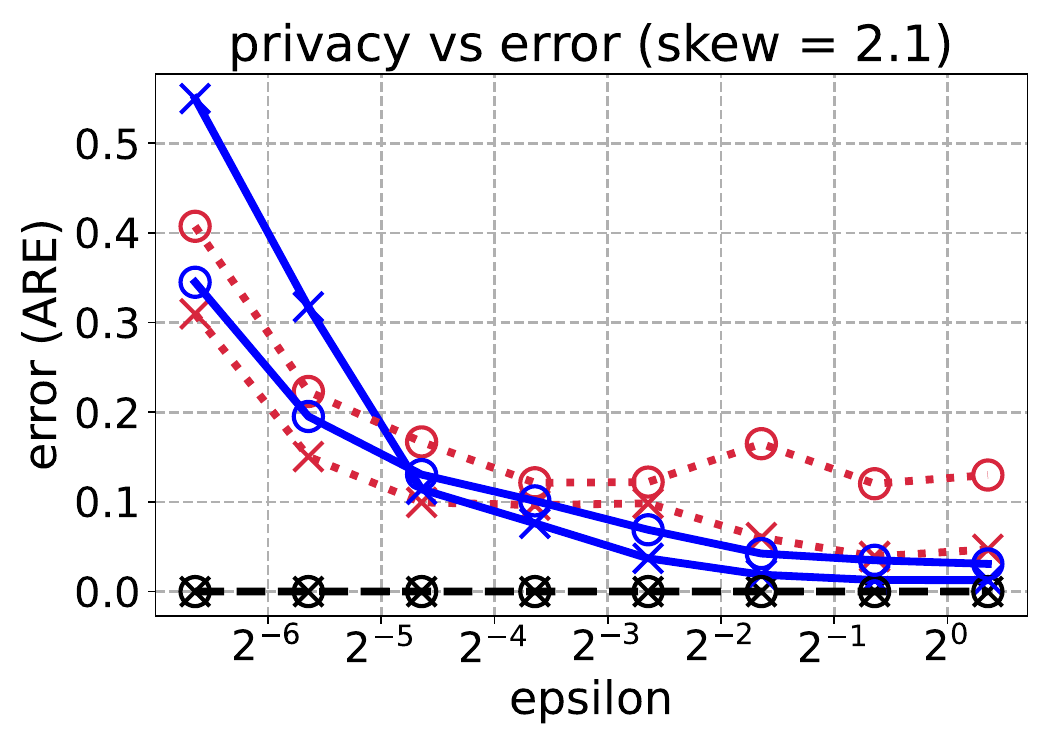}
        \caption{Memory = $24$KB, skew = 2.1}
        \label{fig:eps_are_21}
    \end{subfigure}

    \caption{Comparative frequency estimation on synthetic data. $\delta = 0.001$.
    For Figures~\ref{fig:mem_freq_time} \&~\ref{fig:mem_freq_are}, memory is increased by changing the width of the data structures.
    In contrast, for Figure~\ref{fig:depth_freq_are}, memory is increased by changing the depth of the data structures.
    }
    \label{fig:syn}
\end{figure*}

We begin by looking at frequency estimation.
A key contribution of our work is improving the throughput of existing work with the use of lazy updates.
We validate this contribution here, and observe additional benefits of our approach, such as improved utility.

Since the continual observation sketches implement the $\gaussbm$, each counter internally stores one node per level of the binary tree that overlays the stream.
This results in significantly higher per-counter memory usage than in traditional sketches. 
Consequently, for a fixed memory budget, relative to the non-private sketches, the sketch width (i.e., the number of columns) is lower for the private variants. 
Furthermore, in the $\lazy$, the effective stream length observed by each $\gaussbm$ counter is reduced to $T/w$.
This results in a smaller counter size compared with the $\punctual$ (See Table~\ref{tab:comparison}).
Thus, it has a larger sketch width under the same memory budget.
For instance, with a 24KB memory allocation, the widths for $\cms$, $\lazycms$, and $\punctualcms$ are 2000, 55, and 33, respectively.

Figure~\ref{fig:mem_freq_time} reports the total time to process the stream as a function of memory allocation (via changing the sketch width). 
As expected, non-private sketches ($\cms$, $\cs$) and the lazy variants ($\lazycms, \lazycs$) maintain constant throughput as the sketch width increases, since the update cost is independent of the sketch width. 
In contrast, the Punctual variants ($\punctualcms, \punctualcs$) exhibit a linear increase in update time as the sketch width increases.
This is because the $\punctual$ updates every column in each row of the sketch at each stream arrival. 
The $\lazy$ avoids this overhead with the use of lazy updates to the output, resulting in a significant improvement in throughput.
Notably, $\lazycms$ achieves up to $250\times$ higher throughput than $\punctualcms$, highlighting a key performance advantage of our approach.

Figure~\ref{fig:mem_freq_are} shows the relationship between sketch width and ARE on the top-15 heavy items. 
All variants demonstrate decreasing ARE with increasing memory, reflecting the reduction in hash collisions. 
However, the $\lazy$ consistently outperforms $\punctual$, achieving lower ARE at the same memory level. 
This is because $\lazy$ is initialized with a larger width to achieve the same memory allocation, which results in fewer hash collisions. 
The improved accuracy is an additional benefit of the $\lazy$, on top of its superior runtime performance.

Figure~\ref{fig:depth_freq_are} evaluates the effect of increasing memory (via sketch depth) on accuracy. 
All variants show decreasing ARE as depth increases, though the benefit is more pronounced for the non-private baselines. 
This suggests that adding more hash functions has diminishing returns in the private setting, where there is an additional source of error.

Figure~\ref{fig:skew_are} illustrates how the Average Relative Error (ARE) evolves with increasing input skew. 
As the distribution becomes more skewed, the ARE of the private sketches gradually levels off. 
This plateau reflects a shift in the dominant source of error: while higher skew reduces collision-induced errors in the sketch, the noise introduced by the differentially private counters remains constant, eventually becoming the primary contributor to the overall error.

In Figure~\ref{fig:eps_are_11}, we evaluate ARE under a high skew ($\phi = 1.1$) while varying the privacy parameter $\varepsilon$. Here, decreasing privacy (i.e., increasing $\varepsilon$) has a limited effect on ARE. 
This is because the dominant error arises from collisions rather than noise. 
Interestingly, both $\lazycms$ and $\punctualcms$  exhibit lower ARE at very small $\varepsilon$ values. 
This can be attributed to the $\min$ query operation selecting a row with large negative noise, which counteracts collision-induced overestimation.
This is a form of noise-induced bias that occasionally benefits accuracy.

Figure~\ref{fig:eps_are_21} investigates the impact of $\varepsilon$ under low skew ($\alpha = 2.1$), where collision error is reduced and noise becomes the dominant error source. 
Here, we observe a clear decrease in ARE as $\varepsilon$ increases, consistent with the inverse dependence of noise magnitude on $\varepsilon$. 
Initially, ARE decreases approximately linearly with $\varepsilon^{-1}$, before saturating as noise becomes comparable to residual collision error. 

\section{Prior Work}
\label{sec:priors}

Differentially private sketches are motivated by the need to perform sensitive streaming analytics under formal privacy guarantees. 
These methods adapt classical sketching techniques to meet the requirements of differential privacy. 
In the 1-pass model, it is sufficient to inject a suitably calibrated noise into the sketch at the beginning of the stream.
Prior work has used Gaussian noise \cite{pagh2022improved,zhao2022differentially} and randomized response \cite{bassily2017practical}.
In the continual observation model, the current approach is to use a continual observation counter in each cell in the sketch \cite{epasto2023differentially}.
However, to maintain privacy, this requires updating every cell in the sketch at each stream arrival.
Reducing this computational overhead is the main motivation for this work.

Recently, differentially private sketches have seen increasing application in a range of privacy-preserving data analysis tasks, including synthetic data generation \cite{holland2025private}, quantile estimation \cite{zhao2022differentially}, and heavy hitter detection \cite{bassily2017practical}. 
These efforts demonstrate the importance of sketches as a foundation for scalable private analytics.

As an alternative to linear sketches, the $\mg$ algorithm~\cite{misra1982finding} can be privatized \cite{biswas2024differentially,chan2011private,lebeda2023better} and used for frequency and heavy hitter approximation.
$\mg$ is a counter-based method that maintains up to $w$ key-value pairs $(x, c)$, tracking frequent items and their approximate counts.

Unlike sketch-based methods, privatizing $\mg$ is non-trivial: simply adding noise to counters is insufficient because the \emph{identities} of tracked items can change between neighboring inputs, making the output distinguishable. 
In the 1-pass model, this issue is formalized by Lebeda and Tetek~\cite{lebeda2023better}, who show that $\mg$ can expose up to two \emph{isolated elements} (tracked items that differ between two neighboring streams). 
This implies that while isolated elements exist, the instability is limited and can be obfuscated with noise. 
Moreover, since these elements typically appear with very low counts (at most one), they can be suppressed using thresholding techniques (as in stability histograms), mitigating privacy leakage with high probability.

However, this idea cannot be easily extended to the continual observation setting.
The reason being that the identity of isolated elements can change as the stream progresses and an adversary (with access to neighboring datasets) can track these changes using the deterministic $\mg$.
Thus, the sensitivity of the outputs cannot be bound in any useful way.

\section{Conclusion}
\label{sec:conclusion}

We have presented a new framework for sketch-based frequency estimation that satisfies differential privacy under the challenging continual observation model. 
Our approach is built around a \emph{lazy update} mechanism, which selectively injects noise into a small, rotating portion of the sketch at each time step. 
This design choice dramatically reduces per-update computational overhead while maintaining strong utility and privacy guarantees.

Through the design and implementation of the $\lazy$ algorithm, we demonstrate that it is possible to achieve both high throughput and high utility in privacy-sensitive streaming applications. 
In particular, our experiments show that $\lazy$ improves update throughput by a factor of 250 compared to the state-of-the-art, while also offering better or comparable accuracy across a wide range of parameter settings.

\bibliographystyle{unsrtnat}
\bibliography{references}  







\end{document}